\theoremstyle{plain}
\newtheorem{theorem}{Theorem}[section]
\newtheorem{lemma}[theorem]{Lemma}
\newtheorem{corollary}[theorem]{Corollary}
\begin{document}

\providecommand{\tr}{\ensuremath{\mathrm{tr}}}
\providecommand{\supp}{\ensuremath{\mathrm{supp}\ }}
\providecommand{\ad}{\ensuremath{\mathrm{ad} }}
\providecommand{\Ks}{\ensuremath{\mathcal{K}}}
\providecommand{\Hs}{\ensuremath{\mathcal{H}}}
\providecommand{\B}{\ensuremath{\mathcal{B}}}
\providecommand{\Ss}{\ensuremath{\mathcal{S}}}
\providecommand{\C}{\ensuremath{\mathbb{C}}}
\providecommand{\R}{\ensuremath{\mathbb{R}}}
\providecommand{\Z}{\ensuremath{\mathbb{Z}}}
\providecommand{\N}{\ensuremath{\mathbb{N}}}
\providecommand{\1}{\ensuremath{\mathbbm{1}}}
\providecommand{\T}{\ensuremath{\boldsymbol{T}}}
\providecommand{\A}{\ensuremath{\mathfrak{A}}}
\providecommand{\W}{\ensuremath{\mathfrak{W}}}
\providecommand{\xv}{\ensuremath{\mathbf{x}}}
\providecommand{\kv}{\ensuremath{\mathbf{k}}}
\providecommand{\pv}{\ensuremath{\mathbf{p}}}
\providecommand{\0}{\ensuremath{\mathbf{0}}}
\providecommand{\dif}{\ensuremath{\mathrm{d}}}
\providecommand{\D}{\ensuremath{\mathrm{d}}}
\providecommand{\tensor}{\ensuremath{\otimes}}
\providecommand{\com}[1]{{\bf \ $\star $ #1 $\star$\ }}
\providecommand{\abs}[1]{\ensuremath{\left| #1 \right|}}
\providecommand{\betrag}[1]{\ensuremath{\left| #1 \right|}}
\providecommand{\norm}[1]{\ensuremath{\left\| #1 \right\|}}
\providecommand{\wick}[1]{\ensuremath{\text{\bf :} #1 \text{\bf :}}}
\providecommand{\bra}[1]{\ensuremath{\left\langle {#1} \right|}}
\providecommand{\ket}[1]{\ensuremath{\left|{#1} \right\rangle}}

\noindent
\begin{center}
{ \Large \bf Quantum Energy Inequalities for the \\Non-Minimally Coupled Scalar Field}
\\[35pt]
{\large \sc Christopher J.\ Fewster${}^{\dagger}$\ 
 {\rm and}\  Lutz W. Osterbrink${}^\ast$}\\[20pt]  
                 Department of Mathematics,\\
                 University of York,\\
                 Heslington,\\
                 York YO10 5DD, United Kingdom\\[4pt]
                 ${}^{\dagger}$ e-mail: cjf3$@$york.ac.uk \\
                 ${}^{\ast}$ e-mail: lwo500$@$york.ac.uk \\[18pt]
                October 23, 2007
\end{center}
${}$\\[18pt]
{\small {\bf Abstract.} 
In this paper we discuss local averages of the energy density for the non-minimally
coupled scalar quantum field, extending a previous investigation of the classical field. 
By an explicit example, we show that such averages are unbounded from below on
the class of Hadamard states. This contrasts with the minimally coupled field, which obeys
a state-independent lower bound known as a Quantum Energy Inequality (QEI). 
Nonetheless, we derive a generalised QEI for the non-minimally coupled
scalar field, in which the lower bound is
permitted to be state-dependent. This result applies to general globally hyperbolic curved
spacetimes for coupling constants in the range $0<\xi\leq 1/4$. We analyse the
state-dependence of our QEI in four-dimensional Minkowski space and show
that it is a non-trivial restriction on the averaged energy density in the
sense that the lower bound is of lower order, in energetic terms, than the
averaged energy density itself.
}

\vspace{0.3truecm}
{\noindent {\bf PACS Numbers} 04.62.+v}

${}$

 
\section{Introduction}
For more than 30 years it has been known that the stress-energy tensor
of the classical scalar field, obtained from the Lagrangean 
\begin{equation}\label{eqn_lagrangean}
L[\phi]=\frac{1}{2}(\nabla \phi)^2-\frac{1}{2}(m^2+\xi R)\phi^2 ,
\end{equation}
does not satisfy the weak energy condition
(WEC) at non-minimal coupling, i.e., $\xi\not=0$
(see~\cite{B74} and~\cite{FR01} for a simple example in Minkowski space).
Naturally, this raises the question of whether there are any restrictions on the extent of WEC violation and whether this field could support exotic phenomena (e.g., violations of the second law of thermodynamics) that depend on macroscopic spacetime regions of negative energy density.

We showed in~\cite{FO06} that under certain conditions one can find lower bounds for local averages of the energy density of the classical non-minimally coupled scalar field. 
This paper extends our analysis to the case of the quantised field.

As is well known, quantum field theories obeying the Wightman axioms necessarily violate the WEC~\cite{EGJ65} and in this respect, the non-minimally coupled scalar field resembles the situation at minimal coupling.
However, as we show, there are differences. 
For example, consider  the case where $\xi>0$. Given any bounded subset~$\mathcal{O}$ of Minkowski
space and an arbitrary constant $\rho_0>0$ we will construct a Hadamard
state $\Psi$ 
in which the expected energy density $\langle\rho\rangle_\Psi$ is less than $-\rho_0$
throughout~$\mathcal{O}$.
Therefore, non-trivial local averages of the energy
density, $\langle\rho(\mathfrak{f})\rangle_\Psi$, where $\mathfrak{f}$
is a non-negative test function, are
unbounded from below on the class of Hadamard states.

By contrast, expectation values of the averaged energy density of the
{\em minimally} coupled scalar field are bounded from below on the class of Hadamard states~\cite{F00}.
The latter bound, known as a quantum energy inequality~(QEI), can be written in the form
\begin{equation}
\langle\rho (\mathfrak{f})\rangle_\Psi\geq
-\widetilde{\mathfrak{Q}}(\mathfrak{f})\qquad
\end{equation}
for all Hadamard states $\Psi$, where~$\widetilde{\mathfrak{Q}}(\mathfrak{f})$
is a constant (see~\cite{F03,R04,Ford06} for reviews
and further references concerning such QEIs). 
Clearly, the non-minimally coupled field cannot satisfy a QEI of this type.
However, we will show that it obeys a generalised QEI of the form
\begin{equation}
\langle\rho (\mathfrak{f})\rangle_\Psi \geq -\langle\mathfrak{Q}(\mathfrak{f})\rangle_\Psi
\end{equation}
for all Hadamard states $\Psi$, where~$\mathfrak{Q}(\mathfrak{f})$ is now allowed to be an unbounded
operator, which turns out to involve the Wick square of the field. 
This bound will be proved for averaging along time-like geodesics in general globally hyperbolic spacetimes. Precise statements and the proof are given in
section~\ref{mainresult}. State-dependent QEIs (and related results)
have recently been studied in an abstract context by one of
us~\cite{F06}, in which they are naturally suggested by the mathematical
framework. This paper complements
that work by giving a concrete example of a quantum field which admits a
state-dependent bound but cannot admit a state-independent one.

The state-dependent nature of the lower bound raises an important question. 
It is clear that setting~$\mathfrak{Q}(\mathfrak{f})=-\rho (\mathfrak{f})$ would provide a rather trivial inequality of the above type. Are our bounds similarly trivial?
In section~\ref{section_nontriviality} we will analyse this question in two ways. The first is based on a proposal in~\cite{F06},
in which a QEI would be declared trivial if there exist constants~$c$ and~$c'$ such that
\begin{equation}\label{eqn_nontriviality}
\abs{\langle \rho (\mathfrak{f})\rangle_\Psi} \leq c+ c' \abs{\langle\mathfrak{Q}(\mathfrak{f})\rangle_\Psi}
\end{equation}
for all Hadamard~$\Psi$. We will show that our bound is non-trivial in
this sense by considering finite temperature states in Minkowski space.
The second way uses so-called $H$-bounds: we show that
$\mathfrak{Q}(\mathfrak{f})$ can be bounded by any power of the
Hamiltonian greater than $2$, while $\rho (\mathfrak{f})$ cannot be
bounded by powers less than $3$.\footnote{The power of $3$ emerges by considering a
particular family of states, and it may be that $\rho (\mathfrak{f})$
can only be bounded by powers of at least $4$, as would be natural on
dimensional grounds. See section~\ref{section_energetics} for more discussion.} Thus,
although the lower bound is state-dependent, it is more stringent in
energetic terms than any upper bound on the averaged energy density: one
may say that the $\mathfrak{Q}(\mathfrak{f})$ is of lower order than
$\rho(\mathfrak{f})$. In particular, states that exhibit large negative energy densities over extended spacetime regions necessarily have large positive overall energy. 
Further comments on the significance of our results are given in section~\ref{conclusion}.

\section{The non-minimally coupled field}\label{the_NMCQF}
We begin by recalling the definition of the non-minimally coupled scalar field, its quantisation and the
construction of the stress-energy tensor.
This will serve to fix our conventions. 

The classical Lagrangean describing the field on a $n$-dimensional spacetime\footnote{Our
sign conventions are those of Birrell and Davies~\cite{BD82}, 
i.e., the $[-,-,-]$ convention in the classification scheme of Misner, Thorne and Wheeler~\cite{MTW73}.}
$\mathbf{M}=(M,g)$ is given by \eqref{eqn_lagrangean},
where $m,\xi$ are real constants and $R$ is the Ricci scalar with respect to the metric~$g$.
The constant $\xi$ is called the coupling constant. If~$\xi=0$, the field is said to be minimally
coupled. For~$\xi=(n-2)/(4n-4)$ and~$\xi=1/4$, one speaks of conformal and super-symmetric coupling,
respectively. In this paper we will focus on values $\xi\in [0,1/4]$, which clearly contains all the special values just mentioned. 
The Lagrangean~\eqref{eqn_lagrangean} leads to the wave equation 
\begin{equation}\label{eqn_waveeqn}
P_\xi\phi=0,
\end{equation}
where $P_\xi:= \square_g+\left(m^2 +\xi R\right)$ is the Klein-Gordon operator, and
$\square_g$ is the d'Alembert\-ian with respect to the metric~$g$.
We will follow the standard convention and denote the space of compactly supported, smooth, complex-valued functions on~$M$ by~$\mathscr{D}(M)$.
Assuming that the spacetime is globally hyperbolic, there is an antisymmetric bi-distri\-bu\-tion $E_\xi
(x,y)$ which is the difference of the advanced and retarded Green
functions. 

The theory is quantised by introducing a unital $*$-algebra $\mathfrak{A}_\xi
(\mathbf{M})$, which is generated by objects $\Phi(f)$
($f\in\mathscr{D}(M)$) subject to the relations that (a) the map $f\rightarrow \Phi (f)$
is complex linear; (b) $\Phi (f)^{*}=\Phi (\overline{f})$; (c) $\Phi (P_\xi f)=
0$; (d) $[\Phi (f),\Phi (h)]=i E_\xi (f,h)\1$ for all $f,h\in\mathscr{D}(M)$. Properties (c) and (d)
enforce the field equation and the canonical commutation relations
respectively. In this framework, states are positive and normalised linear
functionals $\langle\cdot\rangle_\Psi\rightarrow\C$ on the algebra~$\mathfrak{A}_\xi (\mathbf{M})$.
In particular, we will be interested in Hadamard states: in such a state 
$\Psi$, the two-point function $\omega_2^\Psi(x,y)=\langle \Phi (x)\Phi
(y)\rangle_\Psi$ is a distribution with a prescribed
singularity structure so that the difference between the two-point functions of
any two Hadamard states is smooth.
See~\cite{W94} and references therein, for details on Hadamard states.
Some of our later results will be based on a characterisation of the
Hadamard states in terms of microlocal analysis due to
Radzikowski~\cite{R96}.

We now turn to the problem of quantising quadratic classical
expressions of the form $G^{class}(x)=[\sum_i \hat{ D}_i(\phi\otimes
\phi)]_c(x)$, where the $\hat{D}_i$ are linear differential operators on $\mathcal{C}^\infty
(M\times M)$ with smooth coefficients and $[\cdot]_c(x)$ denotes the
`coincidence limit' $[F]_c(x):=F (x,x)$, of any smooth function
$F\in\mathcal{C}^\infty (M\times M)$.
The quantised normal ordered form of $G^{class}$ in the Hadamard state $\Psi$ is then defined by
\begin{equation}\label{eqn_class_quad_obs}
\langle G^{quant}\rangle_\Psi(x)=\Big[\sum_i \hat{D}_i\wick{\omega_2^\Psi}\Big]_c (x),
\end{equation} 
where~$\wick{\omega_2^\Psi}=\omega_2^\Psi-\omega_2^0$ is the normal ordering of $\omega_2^\Psi$
with respect to a reference Hadamard state~$\omega_0$.
In Minkowski space one has the distinguished vacuum state $\Omega$ and therefore one usually
chooses  $\omega_2^0=\omega_2^\Omega$. In all quasi-free representations, normal ordering coincides with Wick normal
ordering of annihilation and creation operators.  

The classical stress-energy tensor of the non-minimally coupled scalar
field can be calculated by varying the action of the Lagrangean~\eqref{eqn_lagrangean} with respect to the metric, and takes the form 
\begin{equation}\label{eq_st0}
T^{class}_{\mu \nu}=\left(\nabla_\mu \phi\right)\left( \nabla_\nu \phi\right) +\frac{1}{2}g_{\mu\nu} \left(m^2\phi^{2}-(\nabla \phi)^2\right)+\xi \left( g_{\mu\nu}\square_{g} -\nabla_\mu \nabla_\nu-G_{\mu\nu}\right) \phi^2,
\end{equation}
where $G_{\mu\nu}$ is the Einstein tensor. 
The term proportional to the coupling constant~$\xi$ originates in the variation of the coupling term in the Lagrangean density.
In a Ricci-flat spacetime, the differential operators in~\eqref{eq_st0} proportional to $\xi$ are still present. So minimal coupling and vanishing Ricci scalar result in the same wave equation but in a different stress energy tensor.
 
To quantise~\eqref{eq_st0} in the way we introduced above, we need to bring it into the form used in~\eqref{eqn_class_quad_obs}. This can be done with the definition of the Klein-Gordon operator $P_\xi$.
One finds that
\begin{eqnarray}\label{eq_st1}
T^{class}_{\mu\nu}&=&(1-2\xi)\left(\nabla_\mu\phi\right)\left(\nabla_\nu\phi\right)+\frac{1}{2}\left(1-4\xi\right)g_{\mu\nu}\left(m^2\phi^2-(\nabla \phi)^2\right)
\nonumber\\
&&{}-2\xi \left(\phi\nabla_\mu\nabla_\nu \phi+\frac{1}{2}R_{\mu\nu}\phi^2-\frac{1}{4}\left(1-4\xi\right)g_{\mu\nu} R\phi^2 - g_{\mu\nu}\phi P_\xi\phi \right).
\end{eqnarray}
The last term in the bottom line vanishes ``on shell'', that is for $\phi$ satisfying the wave equation~\eqref{eqn_waveeqn}.

We shall often study the energy density of~\eqref{eq_st1} with respect to
freely falling observers. Assume that $\gamma $ is a time-like geodesic parameterised by proper
time,\footnote{We require $\gamma$ to be connected, but it does not have to be inextendible.}
i.e., $\dot{\gamma}^2=1$ and $\nabla_{\dot{\gamma}}\dot{\gamma}=\dot{\gamma}^\mu\nabla_\mu\dot{\gamma} =0$. Using this, together with~\eqref{eq_st1}, one can show that the classical energy density $\rho^\textrm{class}_\phi=T^\textrm{class}_{\mu\nu}\dot{\gamma}^\mu \dot{\gamma}^\nu$ on $\gamma$ is
\begin{eqnarray}\label{eq_geodavst1}
\rho^\textrm{class}_\phi&=&(1-2\xi)\left(\nabla_{\dot{\gamma}}\phi\right)^2+\frac{1}{2}\left(1-4\xi\right)\left(m^2\phi^2-(\nabla \phi)^2\right)
\nonumber\\
&&{}-2\xi \left(\phi\nabla_{\dot{\gamma}}^2\phi+\frac{1}{2}R_{\mu\nu}\dot{\gamma}^\mu \dot{\gamma}^\nu\phi^2-\frac{1}{4}\left(1-4\xi\right)R\phi^2 -\phi P_\xi\phi \right).
\end{eqnarray}
Now let $\mathcal{T}$ be an open tubular neighbourhood of $\gamma$. Take a family of smooth vector
fields~$\{v_i\}_{i=0\dots n-1}$ on~$\mathcal{T}$, 
whose restriction to $\gamma$ is a vielbein with the property that
$v_0|_\gamma=\dot{\gamma}$, so we have~$g^{\mu\nu}|_{\gamma}=v^\mu_0v^\nu_0-\sum_{i=1}^{n-1}v_i^\mu v_i^\nu$.  
We now introduce the operators
\begin{eqnarray}
\hat{\rho}_1&=&\frac{1}{2}(\nabla_{v_0} \otimes \nabla_{v_0})+\frac{1}{2}\left(1-4\xi\right)\left(m^2(\1\otimes \1)+\sum_{i=1}^{n-1}(\nabla_{v_i} \otimes \nabla_{v_i})\right),\\
\hat{\rho}_2&=&2 (\1\otimes_\mathfrak{s}\nabla^2_{v_0}),\\
\hat{\rho}_3&=&-\left(\1\otimes_\mathfrak{s} R_{\mu\nu}v_0^\mu v_0^\nu)\1\right)+\frac{1}{2}\left(1-4\xi\right)(\1\otimes_\mathfrak{s}R\1) +2(\1\otimes_\mathfrak{s}P_\xi).
\end{eqnarray}
Here $\otimes_\mathfrak{s}$ is the symmetrised tensor product, i.e., $P\otimes_\mathfrak{s}P'=\left\{(P\otimes P')+(P'\otimes P)\right\}/2$. 
Having introduced these operators, one finds that
\begin{eqnarray}
\rho^\textrm{class}_\phi&=&\left[\hat{\rho} (\phi\otimes \phi )\right]_{c},\\
\hat{\rho}&=&\hat{\rho}_1-\xi \hat{\rho}_2+\xi \hat{ \rho}_3.\label{operator_decompos}
\end{eqnarray}
Note that~$ \hat{ \rho}_3 (\phi\otimes \phi)=0$ in Ricci-flat spacetimes if $\phi$ is a solution to the wave equation~\eqref{eqn_waveeqn}. Furthermore, for minimal coupling~($\xi=0$), we have $\hat{\rho}=\hat{\rho}_1$. 

We quantise the energy density by replacing the classical point-split field $\phi\otimes\phi$
by the normal ordered two-point function $\wick{ \omega_2^{\Psi} }$ of some Hadamard state~$\Psi$.
As noted before, normal ordering is always performed with respect to some fixed reference Hadamard state $\Psi_0$. 
So the quantised energy density on $\gamma$ in the state $\Psi$ is simply given by
$\langle \rho^\textrm{quant}\rangle_\Psi=[\hat{\rho} \wick{ \omega_2^{\Psi}
}]_c$. Note that our normal ordered energy density is not the same as the
renormalised energy density $\langle \rho^\textrm{ren}\rangle_\Psi$ obtained using
the Hadamard prescription (see,
e.g.,~\cite{W78}) but they are related by $\langle
\rho^\textrm{quant}\rangle_\Psi = \langle \rho^\textrm{ren}\rangle_\Psi
- \langle \rho^\textrm{ren}\rangle_{\Psi_0}$. 

We end this section with a short summary of the non-minimally coupled scalar quantum
field in the $n$-dimensional Minkowski space~$\mathbf{M}^n_{\textrm{Mink}}=(\R^n,\eta )$. 
We define the measure $\D \mu (\kv)$ by
\begin{equation}
\D \mu (\kv)=\frac{\D^{n-1}\kv}{(2\pi)^{n-1}} \frac{1}{2\omega (\kv)},
\end{equation} 
with $\omega (\kv)=\sqrt{\kv^2+m^2}$ and use it to define the one-particle Hilbert space
by $\mathcal{H}=L^2 \left(\R^{n-1},\D\mu (\kv)\right)$.  
We will denote the norm and inner product on $\mathcal{H}$ by
$\norm{\cdot}_\mathcal{H}$ and $\langle\cdot,\cdot\rangle_{\mathcal{H}}$,
and define the bosonic  Fock-space $\mathcal{F}_s(\mathcal{H})$ in the usual way.
Thus, for each $g\in \mathcal{H}$, we have an 
annihilation operator $a(g)$ and creation operator $a^\dagger (g)$ with common
domain $D((N+\1)^{1/2})$, where $N$ is the number operator on
$\mathcal{F}_s(\mathcal{H})$, obeying the canonical commutation
relations $[a(f),a^\dagger(g)] = \langle f,g\rangle_{\mathcal{H}}\1$
(recall that $g\mapsto a(g)$ is antilinear).

For any compactly supported distribution $f\in
\mathscr{E}'(\mathbf{M}_{\textrm{Mink}}^n)$, we define
$\widetilde{f}(\kv)=\hat{f}(\omega(\kv),\kv)$, with the Fourier transformation convention
\begin{equation}
\hat{f}(k)=\int\D^nx\ e^{i kx}f(x).
\end{equation} 
If $f$ additionally satisfies the property that $\|\widetilde{f}\|_\mathcal{H}<\infty$ and $\|\widetilde{\overline{f}}\|_\mathcal{H}<\infty$, we can define \begin{equation}
\Phi (f)=a(\widetilde{\overline{f}}) + a^\dagger (\widetilde{f})
\end{equation}
as an operator 
on $D\left((N+\1)^{1/2}\right)$. If we restrict to
smooth compactly supported $f$, the operators $\Phi(f)$ restricted to
$\cap_{k=0}^\infty D(N^k)$ generate a
representation of $\mathfrak{A}_\xi(\mathbf{M}_\textrm{Mink}^n)$ (for
any $\xi$). 

Formally, we may write 
\begin{equation}\label{eqn_def_annop_munorm}
a (g)=\int \D \mu (\kv) \ \overline{g}(\kv ) a (\kv ) \textrm{ and } a^\dagger (g)=\int \D \mu (\kv) \ g(\kv ) a^\dagger (\kv ).
\end{equation}
with the~$a(\kv)$ and~$a^\dagger (\kv)$ satisfying the commutation relations
\begin{equation}
[a(\kv), a^\dagger (\kv')]=(2\pi)^{n-1}\  2\omega (\kv) \delta
(\kv-\kv')\1,
\end{equation} 
and the smeared field may be written as
\begin{equation}
\Phi(f)= \int\D^n x\, \Phi(x)f(x) ,
\end{equation}
where
\begin{equation}\label{def_n-dim_field}
\Phi (x)=\int\D  \mu (\kv) \left( a(\kv)e^{-ikx} +a^\dagger(\kv)e^{ikx} \right),
\end{equation}
with $x=(t,\xv)$ and $k=(\omega (\kv),\kv)$. Expressions of this type can be made rigorous (cf.\
section X.7 in~\cite{RS75}) and we will make use of this later on.

\section{Quantum states with negative energy density}\label{sec_negenergexample}
In this section we consider the massless quantised scalar quantum field
with $\xi>0$ in
a~$(3+1)$-dimensional Minkowski space~$\mathbf{M}^n_{\textrm{Mink}}$.
It will be shown that local averages of the energy density are unbounded from
below on the class of Hadamard states. 

We start by considering one-particle states $\Psi$, for which we have the identity
\begin{equation}\label{onestateidentity}
\wick{\omega_2^\Psi} (t,\xv,t',\xv ')=\langle\Psi |\wick{\Phi (t,\xv)\Phi (t',\xv')}\Psi\rangle=2\textrm{Re}\left( \overline{\langle \Omega |\Phi (t,\xv)\Psi\rangle} \langle \Omega |\Phi (t',\xv') \Psi\rangle\right),
\end{equation}
as can be shown by writing the fields in terms of annihilation and creation operators. 
The expression for the renormalised energy density becomes relatively simple since the differential operator $\hat{\rho}$, defined in~\eqref{operator_decompos}, acting on bi-solutions in Minkowski space is of the form
\begin{equation}
\hat{\rho}=\frac{1}{2}(\partial_t\otimes\partial_{t '}) +\frac{1}{2}(1-4\xi)\sum_{b=1}^{3}(\partial_{b}\otimes\partial_{b'}) -\xi \left((\partial_t^2\otimes\1)+(\1\otimes \partial^2_{t'} )\right) . 
\end{equation}
So, using~\eqref{onestateidentity} we have
\begin{eqnarray}\label{eqn_onepartenergdens}
\langle\Psi |\rho^\textrm{quant}\Psi\rangle(t,\xv)&=& \abs{\partial_t\langle\Omega|\Phi (t,\xv) \Psi\rangle}^2+(1-4\xi)\sum_{b=1}^{3}\abs{\partial_{b}\langle\Omega|\Phi(t,\xv) \Psi\rangle}^2 \nonumber\\
&& -4\xi \textrm{Re} \left( \overline{\langle\Omega |\Phi(t,\xv) \Psi\rangle }\times\partial^2_t\bra{\Omega}\Phi(t,\xv) \ket{\Psi} \right).
\end{eqnarray}
For $\kappa>0$, we consider (normalised) one-particle states of the form $\Psi_\kappa=a^\dagger (h_\kappa )\Omega$ 
where $h_\kappa(\kv)=4\pi\sqrt{2}(\kappa-\abs{\kv}/3)e^{-\abs{\kv}/\kappa}/\kappa^2$, for which
\begin{eqnarray}
\langle\Omega|\Phi (t,\xv)\Psi\rangle&=&\int \D \mu (\kv) \  \ e^{-i(t\abs{\kv}-\xv\kv)}  h_\kappa(\kv) .
\end{eqnarray}
Owing to
the rapid decay of $h_\kappa$, $\Psi_\kappa$ is Hadamard. It has expected energy
\begin{equation}
\langle \Psi_\kappa| H\Psi_\kappa\rangle = \frac{2\kappa}{3}.
\end{equation} 
The functions $h_\kappa$ obey $h_{\lambda\kappa}(\kv)=h_\kappa
(\kv/\lambda)/\lambda$, as a consequence of which we have the scaling relation
\begin{equation}\label{eq:scaling}
\langle\Psi_{\lambda\kappa}|\rho^\textrm{quant} \Psi_{\lambda\kappa}\rangle (x)
=
\lambda^4\langle\Psi_\kappa|\rho^\textrm{quant}\Psi_\kappa\rangle(\lambda x).
\end{equation}
Evaluating the energy density of the state $\ket{\Psi_\kappa}$ at the spatial
origin (for example) one 
finds that\footnote{The well-known identity $\int_0^\infty \D k\ k^ne^{-k}=n!$ makes most of the calculations almost trivial.}  
\begin{equation}
\langle\Psi_\kappa|\rho^\textrm{quant}\Psi_\kappa\rangle (t,\mathbf{0})
=\frac{8\kappa^4}{3(1+t^2\kappa^2)^5\pi^2}\left\{(3t^4\kappa^4+3t^2\kappa^2) -\xi(18 t^4\kappa^4-44t^2\kappa^2+2)\right\},
\end{equation}
so, in particular, 
\begin{equation}\label{eq_negenergatpoint}
\langle\Psi_\kappa|\rho^\textrm{quant}\Psi_\kappa\rangle(0,\mathbf{0})=-\xi \frac{(2\kappa)^4}{3\pi^2}.
\end{equation}
Owing to continuity of the expected energy density and
\eqref{eq_negenergatpoint} it follows that to every $\kappa>0$, there exists a constant $\tau >0$ such
\begin{equation}
\langle\Psi_\kappa|\rho^\textrm{quant}\Psi_\kappa\rangle(x)\leq -\xi
\frac{(2\kappa)^4}{6\pi^2}\qquad\textrm{for all $x\in B(\tau)$},
\end{equation}
where $B(\tau)$ is the open ball
\begin{equation}
B(\tau) = \{(t,\xv)| t^2+\abs{\xv}^2<\tau^2\}.
\end{equation}
Now fix some $\kappa$ and some appropriate $\tau$. 
As a consequence of \eqref{eq:scaling} we can find a $\kappa '>0$ to every (arbitrary) $\tau '>0$, such that
\begin{equation}
\langle\Psi_{\kappa'}|\rho^\textrm{quant}\Psi_{\kappa'}\rangle(x)\leq -\xi
\frac{(2\kappa ')^4}{6\pi^2}\qquad\textrm{for all $x\in B(\tau' )$}.
\end{equation}
In particular, we can take $\kappa '=\kappa \tau/\tau '$. We
have therefore constructed a Hadamard state with energy density less
than $-\xi\kappa^4\tau^4/(6\pi^2(\tau') ^4)$ on an arbitrary region $B(\tau')$ [by translational
invariance, the same applies to any other spacetime ball of
radius $\tau'$]. The total expected energy of this state is $2\kappa\tau/(3\tau' )$.

Note that the product of $\kappa'$ and $\tau'$ is constant.
This shows that we may arrange for large regions of negative energy
density albeit with low magnitude. We may extend the example as
follows. Suppose a constant energy density $\rho_0>0$ is given, and choose
an integer $j>6\pi^2\rho_0/(\xi(\kappa')^4)$. Then the $j$-particle
state $\Psi_{\kappa'}^{\otimes
j}=\Psi_{\kappa '}\otimes\dots\otimes \Psi_{\kappa' }$ has
energy density
\begin{equation}
\langle\Psi^{\otimes j}_{\kappa '}|\rho^\textrm{quant}\Psi^{\otimes
j}_{\kappa '}\rangle(x) = j \langle\Psi_{\kappa '}|\rho^\textrm{quant}\Psi_{\kappa' }\rangle(x)
<-\rho_0\textrm{ for all }x\in B(\tau '),
\end{equation}
and total energy
\begin{equation}
\langle \Psi_{\kappa'}^{\otimes j}| H\Psi_{\kappa'}^{\otimes j}\rangle =
\frac{2j\kappa '}{3} > \frac{2\pi^2\rho_0}{\xi (\kappa ')^3}=(\tau' )^3\frac{2\pi^2\rho_0}{\xi \kappa^3\tau^3},
\end{equation}
illustrating that the large negative energy density effects also require
large positive overall energy, at least in this example. We will see
later that this is a general phenomenon. 

Summarising, we have shown that to any bounded subset~$\mathcal{O}$ of Minkowski
space and arbitrary constant $\rho_0>0$ there is a Hadamard
state in which the expected energy density is less than $-\rho_0$
throughout~$\mathcal{O}$. In particular any smearing
$\rho^\textrm{quant}(\mathfrak{f})$ with a non-negative compactly
supported distribution $\mathfrak{f}$ is unbounded from below on the
class of Hadamard states. 

\section{Quantum energy inequalities}\label{mainresult}
In this section we are going to derive the main result that is to give a lower bound for time-like averages of the energy density.
We start with a quantum field on a curved spacetime. In a second step we specialise these results
to Minkowski space, where the vacuum state is the preferred reference state.

\subsection{Globally hyperbolic spacetime}
We keep the same assumptions as in section~\ref{the_NMCQF}; in particular~$\gamma $ is a
time-like, connected geodesic parameterised by proper time.  
Our goal is to find a lower bound for the weighted average of the quantum energy density on~$\gamma$, 
\begin{equation}\label{eqn_generalaverage}
 \langle \rho^\textrm{quant}\circ\gamma\rangle_\Psi  (\mathfrak{f})
=\int \D \tau \ \mathfrak{f}(\tau)\  \langle \rho^\textrm{quant}\rangle_\Psi  (\gamma (\tau))
\end{equation}
in the case where~$\mathfrak{f}=f^2$ for some real valued function~$f\in
\mathscr{D}(\R,\R)$. The main task is to rewrite~\eqref{eqn_generalaverage} in
such a way that the lower bound may be deduced by discarding manifestly
positive terms. 

To start, we use~\eqref{operator_decompos} to get 
\begin{eqnarray}\label{first_glob_hyp_aver}
\lefteqn{\langle \rho^\textrm{quant}\circ\gamma\rangle_\Psi  (f^2)}\nonumber\\
&=& \left( [\hat{\rho}_1\  \wick{\omega_2^\Psi}]_c
\circ\gamma\right) (f^2) -\xi  \left(  [\hat{\rho}_2\  \wick{\omega_2^\Psi}]_c\circ\gamma\right) (f^2)+ \xi\left( [\hat{ \rho}_3\  \wick{\omega_2^\Psi}]_c \circ\gamma\right) (f^2).
\end{eqnarray}
Each term on the right-hand side will be treated in turn. It will also
be useful to define $\varphi (\tau,\tau')=(\gamma (\tau),\gamma (\tau '
))$, and to write $\varphi^*F$ to denote the pull-back $\varphi^*F(\tau,\tau')
=F(\gamma (\tau),\gamma (\tau'))$ of a smooth function $F$ from $M\times
M$ to $\R\times \R$.

The first term on the right-hand side in~\eqref{first_glob_hyp_aver} may
then be rewritten, following~\cite{F00}, as
\begin{eqnarray}\label{eqn_rho1intident}
\lefteqn{\left( [\hat{\rho}_1\  \wick{\omega_2^\Psi}]_c
\circ\gamma\right) (f^2)}\nonumber\\
&=&\int \D \tau \ f^2(\tau )\ \varphi^*(\hat{\rho}_1\  \wick{\omega_2^\Psi}) (\tau,\tau)\nonumber\\
&=&\int \D \tau \D \tau ' \ \delta (\tau-\tau ')f(\tau)f(\tau') \varphi^*(\hat{\rho}_1\  \wick{\omega_2^\Psi})(\tau, \tau ')\nonumber\\
&=&\int_0^\infty \frac{\D \alpha}{\pi}\int \D \tau \D \tau ' \ e^{-i\alpha (\tau -\tau ')}f(\tau)f(\tau ') \varphi^*(\hat{\rho}_1\  \wick{\omega_2^\Psi})(\tau, \tau ')\nonumber\\
&=&\int_0^\infty \frac{\D \alpha}{\pi}  \varphi^*(\hat{\rho}_1\  \wick{\omega_2^\Psi})\left(\overline{f_\alpha} , f_\alpha \right)\nonumber\\
&=&\int_0^\infty \frac{\D \alpha}{\pi}  \varphi^*(\hat{\rho}_1\  \omega_2^\Psi)\left(\overline{f_\alpha} , f_\alpha \right)-\int_0^\infty \frac{\D \alpha}{\pi}  \varphi^*(\hat{\rho}_1\  \omega_2^0)\left(\overline{f_\alpha} , f_\alpha \right),
\end{eqnarray}
where $f_\alpha (\tau)=e^{i\alpha\tau}f (\tau)$. Here,  we have made use
of the Fourier representation of the $\delta$-function and also 
the symmetry of the normal ordered two-point function to arrange that
the $\alpha$-integral takes place over the positive half-axis. 
A decomposition of this kind will be referred to as a {\it point-splitting trick}, see~\cite{F00}.
The distributional pull-backs of the form $\varphi^*(\hat{\rho}_1\ 
\omega_2^\Psi)$ appearing in the last step were shown to exist in~\cite{F00}, using the
microlocal characterisation of Hadamard states given in~\cite{R96}. 
Moreover, if $\xi \leq 1/4$
then these distributions are positive type, i.e., $\varphi^*(\hat{\rho}_1\ 
\omega_2^\Psi)(\overline{\zeta},\zeta)\geq 0$ for all $\zeta\in\mathscr{D}(\R)$. 
This is a direct consequence of Theorem~2.2 in~\cite{F00} and the form
of $\hat{\rho}_1$. In particular, each integrand in the bottom line
in~\eqref{eqn_rho1intident} is non-negative. Finally, the integrals
converge, because (as shown in~\cite{F00}) 
$\varphi^* (\hat{\rho}\omega_2^\Psi) (\overline{ f_\alpha},f_\alpha  )$ is of rapid
 decay as $\alpha\rightarrow +\infty$ for any Hadamard state $\Psi$ and
partial differential operator~$\hat{\rho}$ with smooth coefficients.

To treat the second term on the right-hand side
in~\eqref{first_glob_hyp_aver}, we will need the following identity, proved in appendix~\ref{app_identity}:
\begin{theorem}\label{thm_symident}
Let $F$ be a smooth function on $M\times M$ and $\partial$ be a partial differential
operator of the form $\partial=\zeta^\mu\nabla_\mu$, with a smooth vector field
$\zeta$. Then
\begin{eqnarray}\label{eqn_symident}
\lefteqn{2h^2[(\1\otimes_\mathfrak{s}\partial^2)F]_c +\partial \Big[(\1\otimes_\mathfrak{s}\partial h^2)F-(\1\otimes_\mathfrak{s}h^2\partial)F\Big]_c}\nonumber\\
&=&-2[(\partial h\otimes_\mathfrak{s}\partial h)F]_c+2(\partial h)^2[(\1\otimes_\mathfrak{s}\1)F]_c+2\partial [(h\otimes_\mathfrak{s}\partial h)F]_c,
\end{eqnarray}
for $h\in \mathscr{D}(M,\R)$.
\end{theorem}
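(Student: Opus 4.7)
The identity is a pointwise algebraic statement, and I would prove it by direct computation. The plan is to expand both sides in the ``basis'' of basic coincidence limits $[F]_c$, $[(\partial\otimes\1)F]_c + [(\1\otimes\partial)F]_c$, $[(\partial^2\otimes\1)F]_c + [(\1\otimes\partial^2)F]_c$ and $[(\partial\otimes\partial)F]_c$, and check that the coefficients agree. Only two ingredients are needed: Leibniz's product rule, and the chain rule for the coincidence operation,
\begin{equation*}
\partial[F]_c = [(\partial\otimes\1)F]_c + [(\1\otimes\partial)F]_c,
\end{equation*}
which holds because $\partial = \zeta^\mu\nabla_\mu$ is first-order and so passes through the diagonal pull-back. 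The theorem is local, so no microlocal or curvature input enters.

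For the LHS, the key simplification is the operator commutator $\partial h^2 - h^2\partial = 2h(\partial h)$ (a multiplication operator). This reduces the second LHS term to $\partial(2h(\partial h)[F]_c)$, which by Leibniz produces a $[F]_c$-coefficient $2(\partial h)^2 + 2h\partial^2 h$ together with a first-derivative contribution $2h(\partial h)([(\partial\otimes\1)F]_c + [(\1\otimes\partial)F]_c)$. Combined with the trivial first LHS term $h^2([(\partial^2\otimes\1)F]_c + [(\1\otimes\partial^2)F]_c)$, this gives a complete expansion in the basis above.

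For the RHS, I would read $\partial h$ throughout as the composite operator $\partial\circ h$, so that Leibniz in each slot introduces mixed-derivative contributions. Expanding $(\partial h\otimes_\mathfrak{s}\partial h)F$ yields at coincidence $(\partial h)^2[F]_c + h(\partial h)([(\partial\otimes\1)F]_c + [(\1\otimes\partial)F]_c) + h^2[(\partial\otimes\partial)F]_c$; the last, mixed-derivative, term is crucial. A similar expansion of $[(h\otimes_\mathfrak{s}\partial h)F]_c$ gives $h(\partial h)[F]_c + \tfrac{1}{2}h^2\,\partial[F]_c$, and applying the external $\partial$ (using the chain rule on $\partial[F]_c$) produces the $h^2([(\partial^2\otimes\1)F]_c + [(\1\otimes\partial^2)F]_c)$ needed to match the first LHS term, together with an additional $2h^2[(\partial\otimes\partial)F]_c$ contribution. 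The heart of the identity is that this $+2h^2$ contribution cancels exactly against the $-2h^2$ from the first RHS term, after which the remaining coefficients of $[F]_c$ and of $[(\partial\otimes\1)F]_c + [(\1\otimes\partial)F]_c$ match the LHS automatically.

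The main obstacle is purely combinatorial bookkeeping: many terms at different differential orders appear on each slot, and the symmetrization introduces sign and factor-of-two traps that are easy to mishandle. I would manage this by (i) introducing short labels for each of the basic coincidence limits listed above, (ii) expanding every symmetrized operator explicitly in its two constituents before evaluating at coincidence, and (iii) matching coefficients term by term in this basis, rather than attempting to manipulate the full expressions directly.
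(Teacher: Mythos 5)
Your proposal is correct and follows essentially the same route as the paper's appendix: a direct computation resting on Synge's rule $\partial[F]_c=[(\partial\otimes\1)F]_c+[(\1\otimes\partial)F]_c$ and the Leibniz rule, with the composite reading $\partial h=\partial\circ h$ that the paper also uses, and both sides reducing to the same combination of coincidence limits (your coefficient check against the basis $[F]_c$, $[(\partial\otimes\1)F]_c+[(\1\otimes\partial)F]_c$, $[(\partial^2\otimes\1)F]_c+[(\1\otimes\partial^2)F]_c$, $[(\partial\otimes\partial)F]_c$ matches the paper's common normal form $2h^2[(\1\otimes_\mathfrak{s}\partial^2)F]_c+2(\partial h^2)[(\1\otimes_\mathfrak{s}\partial)F]_c+(\partial^2h^2)[(\1\otimes_\mathfrak{s}\1)F]_c$). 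The only organisational differences are that your commutator observation $\partial\circ h^2-h^2\circ\partial=2h(\partial h)$ streamlines the left-hand side, and that you expand $[(\partial h\otimes_\mathfrak{s}\partial h)F]_c$ explicitly to exhibit the $\pm2h^2[(\partial\otimes\partial)F]_c$ cancellation, whereas the paper lets that term cancel unexpanded against its twin from $2\partial[(h\otimes_\mathfrak{s}\partial h)F]_c$.
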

Now choose a function $f_\mathcal{T}\in\mathscr{D}(\mathcal{T},\R)$ such that~$f_\mathcal{T}\circ \gamma =f $. 
Applying theorem~\ref{thm_symident} with~$F=\wick{\omega_2^\Psi}$, $h=f_\mathcal{T}$ and~$\partial=\zeta^\mu\nabla_\mu$, where $\zeta$ is some smooth vector field on $M$ with the property that $\zeta|_{\gamma}=\dot \gamma$, yields the identity
\begin{eqnarray}\label{eqn_symidentappl}
\lefteqn{2f_\mathcal{T}^2[(\1\otimes_\mathfrak{s}\nabla_{\zeta}^2)\wick{\omega_2^\Psi}]_c +\nabla_{\zeta} \big[(\1\otimes_\mathfrak{s}\nabla_{\zeta}f_\mathcal{T}^2)\wick{\omega_2^\Psi}-(\1\otimes_\mathfrak{s}f_\mathcal{T}^2\nabla_{\zeta})\wick{\omega_2^\Psi}\big]_c}\nonumber\\
&=&-2[(\nabla_{\zeta}f_\mathcal{T}\otimes_\mathfrak{s}\nabla_{\zeta} f_\mathcal{T})\wick{\omega_2^\Psi}]_c+2\nabla_{\zeta} f_\mathcal{T}^2[\wick{\omega_2^\Psi}]_c+2\nabla_{\zeta} \big[(f_\mathcal{T}\otimes_\mathfrak{s}\nabla_{\zeta} f_\mathcal{T})\wick{\omega_2^\Psi}\big]_c.
\end{eqnarray}
All expressions are well defined, since $\wick{\omega_2^\Psi}$ is smooth.
The first expression on the left-hand side is nothing but $f_\mathcal{T}^2[\hat{\rho}_2\wick{\omega_2^\Psi}]_c$.
Let us turn to the other terms. Recalling that $\tau $ is the proper
time along $\gamma$, we can write
\begin{eqnarray}
\int \D \tau \  \left(\nabla_{\dot{\gamma}} [(\1\otimes_\mathfrak{s}\nabla_{\dot{\gamma}}f_\mathcal{T}^2)\wick{\omega_2^\Psi}]_c\circ\gamma \right) (\tau)=\int \D \tau \  \partial_\tau\left( [(\1\otimes_\mathfrak{s}\nabla_{\dot{\gamma}}f_\mathcal{T}^2)\wick{\omega_2^\Psi}]_c\circ\gamma \right) (\tau),
\end{eqnarray} 
which vanishes, since $f_\mathcal{T}$ is of compact support.
Further terms in~\eqref{eqn_symidentappl} will vanish for the same reasons after integration. We finally obtain that
\begin{eqnarray}\label{eqn_intsymidentappl}
\int \D \tau \ f^2(\tau)\ \left([\hat{\rho}_2\wick{\omega_2^\Psi}]_c\circ \gamma\right) (\tau)&=&2\int \D \tau \ (\partial_\tau f)^2\ \varphi^*\wick{\omega_2^\Psi} (\tau,\tau)\nonumber\\
&&-2\int \D \tau \ \left([(\nabla_{\dot{\gamma}} f_\mathcal{T}\otimes_\mathfrak{s}\nabla_{\dot{\gamma}} f_\mathcal{T})\wick{\omega_2^\Psi}]_c\circ\gamma\right)(\tau ).
\end{eqnarray}
The second integral on the right-hand side may be rewritten (up to a factor), using the point-splitting trick, as 
\begin{eqnarray}\label{eqn_intsymidentappl2}
\lefteqn{\int \D \tau \ \left([(\nabla_{\dot{\gamma}} f_\mathcal{T}\otimes_\mathfrak{s}\nabla_{\dot{\gamma}} f_\mathcal{T})\wick{\omega_2^\Psi}]_c\circ\gamma\right)(\tau )}\nonumber\\
&=&\int_0^\infty \frac{\D \alpha}{\pi} \int \D \tau \D \tau' \  e^{-i\alpha (\tau -\tau ')} \partial_\tau \partial_{\tau'} \left(f(\tau) f(\tau')\ \varphi^* \wick{\omega_2^\Psi}(\tau,\tau' )\right)\nonumber\\
&=&\int_0^\infty \frac{\D \alpha}{\pi} \alpha^2 \int \D \tau \D \tau' \  \overline{ f_\alpha}(\tau)f_\alpha (\tau')\  \varphi^* \wick{\omega_2^\Psi}(\tau,\tau' )\nonumber\\
&=&\int_0^\infty \frac{\D \alpha}{\pi} \alpha^2  \ \varphi^* \omega_2^\Psi(\overline{ f_\alpha},f_\alpha  )-\int_0^\infty \frac{\D \alpha}{\pi} \alpha^2  \ \varphi^* \omega_2^0(\overline{ f_\alpha},f_\alpha  ),
\end{eqnarray}
where we have also used integration by parts in $\tau$, $\tau'$ and the fact that $f$ is of
compact  support. As before, the bottom line of~\eqref{eqn_intsymidentappl2} is a difference of
two non-negative terms.

Finally let us put the results of~\eqref{eqn_rho1intident},~\eqref{eqn_intsymidentappl}
and~\eqref{eqn_intsymidentappl2} together with the remaining term
in~\eqref{first_glob_hyp_aver}.\footnote{We do not apply the
point-splitting trick to this term as we cannot necessarily find
smooth real square roots of the geometrical quantities involved.} 
We find that
\begin{eqnarray}
\lefteqn{ \langle \rho^\textrm{quant}\circ\gamma\rangle_\Psi  (f^2)}\nonumber\\
&=&\int_0^\infty \frac{\D \alpha}{\pi}  \varphi^*(\hat{\rho}_1\  \omega_2^\Psi)\left(\overline{f_\alpha} , f_\alpha \right)-\int_0^\infty \frac{\D \alpha}{\pi}  \varphi^*(\hat{\rho}_1\  \omega_2^0)\left(\overline{f_\alpha} , f_\alpha \right)\nonumber\\
&&+2\xi \int_0^\infty \frac{\D \alpha}{\pi} \alpha^2\   \varphi^* \omega_2^\Psi(\overline{ f_\alpha},f_\alpha  )-2\xi\int_0^\infty \frac{\D \alpha}{\pi} \alpha^2  \ \varphi^* \omega_2^0(\overline{ f_\alpha},f_\alpha  )\nonumber\\
&&-2\xi\int \D \tau \ (\partial_\tau f)^2\  \varphi^*\wick{\omega_2^\Psi} (\tau,\tau) +\xi\int \D \tau \ f^2(\tau)\ \varphi^*\left( \hat{\rho}_3\wick{\omega_2^\Psi}\right)(\tau,\tau),
\end{eqnarray}
which, noting that $\varphi^*\wick{\omega_2^\Psi} (\tau,\tau)=\langle
\wick{\Phi^2}\circ\gamma\rangle_\Psi(\tau)$, results in the following
\begin{theorem}
Let $\omega_2^0$ be the two-point function of a reference Hadamard state for the
non-minimally coupled scalar field with coupling constant $\xi\in [0,1/4]$,
defined on a globally hyperbolic spacetime with smooth metric. Furthermore, let
$\gamma $ be a time-like geodesic parametrised in proper time $\tau$ and let
$f\in\mathscr{D}(\R,\R)$.  On the set of Hadamard states, we then find 
\begin{equation}\label{ineq_mink}
\left(\rho^\textrm{quant}\circ\gamma\right) (f^2) \geq -\mathfrak{Q}^{\xi}(f),
\end{equation}
where 
\begin{equation}\label{eqn_Q_decomposition}
\mathfrak{Q}^{\xi}(f)=\widetilde{\mathfrak{Q}}^\xi_A(f)\1+\xi\left(\wick{\Phi^2}\circ\gamma\right)(\mathfrak{Q}_B[f])+\xi\left(\wick{\Phi^2}\circ\gamma\right)(\mathfrak{Q}^\xi_C[f]),
\end{equation}
with 
\begin{equation}
\widetilde{\mathfrak{Q}}^{\xi}_A(f)=\int_0^\infty \frac{\D \alpha}{\pi}  \left[\varphi^*(\hat{\rho}_1\  \omega_2^0)\left(\overline{f_\alpha} , f_\alpha \right)
+2\xi \alpha^2 \  \varphi^* \omega_2^0(\overline{ f_\alpha},f_\alpha  )\right],\label{eqn_qbound}
\end{equation}
and $\mathfrak{Q}_B(f)$ and $\mathfrak{Q}^\xi_C(f)$ are functions in $\mathscr{D}(\R,\R)$ given by
\begin{eqnarray}
\mathfrak{Q}_B[f](\tau)&=&2 (\partial_\tau f(\tau))^2, \label{eqn_bbound}\\
\mathfrak{Q}^\xi_C[f](\tau)&=& f(\tau)^2\ \left(R_{\mu\nu}\gamma^\mu \gamma^\nu-\frac{1}{2}(1-4\xi)R\right)(\tau)\ \label{eqn_sbound}.
\end{eqnarray}
Furthermore $\widetilde{\mathfrak{Q}}_A^{\xi}(f)$ and $\mathfrak{Q}_B[f]$ are non-negative.
\end{theorem}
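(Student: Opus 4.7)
The plan is to assemble the three contributions arising from the decomposition $\hat{\rho}=\hat{\rho}_1-\xi\hat{\rho}_2+\xi\hat{\rho}_3$ in~\eqref{operator_decompos}, applied to $\wick{\omega_2^\Psi}$ and pulled back along $\varphi$, and then to discard all terms which are manifestly non-negative on Hadamard states. The $\hat{\rho}_1$--contribution is handled directly by the point-splitting trick~\eqref{eqn_rho1intident}: writing $\delta(\tau-\tau')$ as a Fourier integral on the half-line and splitting $\wick{\omega_2^\Psi}=\omega_2^\Psi-\omega_2^0$, I obtain two $\alpha$-integrals, each of whose integrands is pointwise non-negative because, for $\xi\leq 1/4$, the distributions $\varphi^*(\hat{\rho}_1\omega_2^\Psi)$ and $\varphi^*(\hat{\rho}_1\omega_2^0)$ are of positive type (Theorem~2.2 of~\cite{F00}, combined with the form of $\hat{\rho}_1$). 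The $\Psi$-dependent integral is then discarded in the lower bound, contributing the first term of $\widetilde{\mathfrak{Q}}_A^\xi(f)$ in~\eqref{eqn_qbound}.

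For the $\hat{\rho}_2$--contribution, I would pick any extension $f_\mathcal{T}\in\mathscr{D}(\mathcal{T},\R)$ of $f$ along $\gamma$ and any smooth vector field $\zeta$ on $\mathcal{T}$ extending $\dot\gamma$, and apply Theorem~\ref{thm_symident} with $F=\wick{\omega_2^\Psi}$, $h=f_\mathcal{T}$, and $\partial=\nabla_\zeta$. Integrating the resulting identity against $\D\tau$ along $\gamma$, the two total-derivative terms vanish by compact support, producing~\eqref{eqn_intsymidentappl}. The remaining coincidence term is reshaped by a second point-splitting trick (this time picking up a factor of $\alpha^2$ from two proper-time derivatives) to give~\eqref{eqn_intsymidentappl2}, again a difference of two non-negative $\alpha$-integrals. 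Discarding the $\Psi$-dependent one contributes the $\alpha^2$-piece of $\widetilde{\mathfrak{Q}}_A^\xi$, while the $(\partial_\tau f)^2 \varphi^*\wick{\omega_2^\Psi}(\tau,\tau)$ term survives as $-\xi\,(\wick{\Phi^2}\circ\gamma)(\mathfrak{Q}_B[f])$ after identifying $\varphi^*\wick{\omega_2^\Psi}(\tau,\tau)=\langle\wick{\Phi^2}\circ\gamma\rangle_\Psi(\tau)$.

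The $\hat{\rho}_3$--contribution is not point-split but simply read off from the definition of $\hat\rho_3$: since $\hat{\rho}_3$ contains the multiplication operators $-R_{\mu\nu}v_0^\mu v_0^\nu\mathbbm{1}$ and $\tfrac{1}{2}(1-4\xi)R\,\mathbbm{1}$ (which just multiply the Wick square on $\gamma$) and a term $2(\mathbbm{1}\otimes_\mathfrak{s}P_\xi)$ which annihilates $\wick{\omega_2^\Psi}$ on shell, the coincidence limit $[\hat{\rho}_3 \wick{\omega_2^\Psi}]_c\circ\gamma$ reduces to $\langle\wick{\Phi^2}\circ\gamma\rangle_\Psi(\tau)$ multiplied by the geometric coefficient appearing in $\mathfrak{Q}_C^\xi[f]$ of~\eqref{eqn_sbound}. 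Collecting the three contributions and isolating the $\Psi$-independent and $\Psi$-dependent pieces yields the decomposition~\eqref{eqn_Q_decomposition}. Non-negativity of $\mathfrak{Q}_B[f]=2(\partial_\tau f)^2$ is immediate; non-negativity of $\widetilde{\mathfrak{Q}}_A^\xi(f)$ follows because both integrands in~\eqref{eqn_qbound} are pointwise non-negative for $\xi\in[0,1/4]$, by the positive-type property of $\varphi^*(\hat{\rho}_1\omega_2^0)$ and $\varphi^*\omega_2^0$ inherited from the Hadamard property of the reference state.

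The main subtlety I would flag is ensuring that the bound does not depend on the arbitrary choices of $f_\mathcal{T}$ and $\zeta$ extending $f$ and $\dot\gamma$ off $\gamma$. Since every geometric object on the left-hand side of~\eqref{eqn_intsymidentappl} after integration involves either coincidence limits restricted to $\gamma$, where $\nabla_\zeta$ coincides with $\partial_\tau$ and $f_\mathcal{T}=f$, or total $\tau$-derivatives of such coincidence limits which integrate to zero, this independence should fall out automatically; the remaining convergence of the $\alpha$-integrals is guaranteed by the rapid decay of $\varphi^*(\hat\rho\,\omega_2^\Psi)(\overline{f_\alpha},f_\alpha)$ for Hadamard $\Psi$ as noted after~\eqref{eqn_rho1intident}.
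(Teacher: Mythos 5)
Your proposal is correct and follows essentially the same route as the paper: the decomposition $\hat{\rho}=\hat{\rho}_1-\xi\hat{\rho}_2+\xi\hat{\rho}_3$, the point-splitting trick for the $\hat{\rho}_1$ term, theorem~\ref{thm_symident} plus a second point-splitting (with the $\alpha^2$ factor) for the $\hat{\rho}_2$ term, the direct on-shell reading of the $\hat{\rho}_3$ term, and discarding the manifestly non-negative $\Psi$-dependent integrals. The only detail left implicit is that folding the $\alpha$-integral onto the positive half-axis uses the symmetry of $\varphi^*\wick{\omega_2^\Psi}$ in $(\tau,\tau')$, which holds because any two Hadamard two-point functions share the same antisymmetric part $\tfrac{i}{2}E_\xi$.
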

This result follows from the previous discussion by discarding
manifestly positive terms. We remark that $\widetilde{\mathfrak{Q}}^{\xi}_A(f)$
depends on the reference state and that for $\xi=0$, we recover results known for
minimal coupling~\cite{F00}. Moreover,~$\mathfrak{Q}^\xi_C(f)$ vanishes if the region of interest is Ricci-flat.

\subsection{Minkowski space}
In this section we apply the results derived in the previous subsection to
$n$-dimensional Minkowski space. Without loss of generality, we average in the time
argument $\tau=t$ at the spatial origin, i.e., $\gamma (\tau)=(\tau,\xv_0)$. 
We choose our reference state to be the vacuum state $\Omega$, which has two-point
function 
\begin{eqnarray}\label{eqn_vactwopointfctn}
\omega_2^\Omega(t,\mathbf{x},t',\mathbf{x}')&=&\int \mathrm{d}\mu (\kv) e^{-i\left[(t-t')\omega (\mathbf{k})-(\mathbf{x}-\mathbf{x}')\mathbf{k}\right]},
\end{eqnarray}
in the distributional sense. For $g\in \mathscr{D}(\R)$, we
find that
\begin{equation}\label{eqn_vacuumsmear}
\varphi^*\omega_2^\Omega\ (\overline{g}\otimes g)=\frac{1}{2}\frac{S_{n-2}}{(2\pi)^{n-1}}\int_0^\infty \mathrm{d}k\ \frac{k^{n-2}}{\omega (k)} \abs{\hat{g}\big(\omega (k)\big)}^2,
\end{equation}
where $S_{n-2}$ is the surface area of the $(n-2)$ dimensional standard unit sphere.\footnote{We have $S_m=2\sqrt{\pi^m}/\Gamma (m/2)$, where $\Gamma$ is the Gamma function.}
We also have the identity
\begin{equation}\label{eqn_mink_waveq}
m^2\varphi^*\omega_2^\Omega (\overline{g}\otimes g)+
\sum_{i=1}^{n-1}\varphi^*\left((\partial_i\otimes\partial_i)\omega_2^\Omega\right)(\overline{g}\otimes g)=\varphi^*\left((\partial_0\otimes\partial_0)\omega_2^\Omega \right)(\overline{g}\otimes g),
\end{equation}
which follows from the spacetime translation invariance of the vacuum and the field equation~\eqref{eqn_waveeqn}.
So one can absorb the mass term and the spatial derivatives
appearing in the definition of $\widetilde{\mathfrak{Q}}^\xi_A(f)$ (via
$\hat{\rho}_1$) into a further term that involves time derivatives. 
One finds that~\eqref{ineq_mink} becomes
\begin{theorem}\label{thm_minkbound}
For the non-minimally coupled scalar quantum field in $n$-dimensional Minkowski space~$\mathbf{M}^n_{\textrm{Mink}}$, 
\begin{equation}
(\rho^\textrm{quant}\circ\gamma) (f^2) \geq -\mathfrak{Q}^{\xi}(f)=
-\left(\widetilde{\mathfrak{Q}}^\xi_A(f)\1+\xi\left(\wick{\Phi^2}\circ\gamma\right)(\mathfrak{Q}_B[f])\right),
\end{equation}
in the sense of quadratic forms on Hadamard states, where
\begin{equation}\label{eqn_Qinmink}
\widetilde{\mathfrak{Q}}_A^{\xi}(f)=\frac{S_{n-2}}{(2\pi)^{n}}\int_0^\infty \D \alpha  \int_0^\infty \mathrm{d}k\ \frac{k^{n-2}}{\omega (k)}\left((1-2\xi)\omega^2(k)+2\xi \alpha^2\right)  \abs{\hat{f}(\alpha+\omega (k))}^2
\end{equation}
and
\begin{equation}
\mathfrak{Q}_B[f](t)=2\left(\partial_t f(t)\right)^2 
\end{equation}
for $f\in \mathscr{D}(\R,\R)$ and $\xi\in [0,1/4]$.
\end{theorem}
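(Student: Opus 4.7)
The plan is to deduce this from the globally hyperbolic theorem just proved by specialising the reference Hadamard state to the Minkowski vacuum $\Omega$ and exploiting flatness plus Poincar\'e invariance to bring $\widetilde{\mathfrak{Q}}^\xi_A(f)$ into the stated closed form. First I would invoke the previous theorem with $\omega_2^0=\omega_2^\Omega$. Since Minkowski space is Ricci-flat, both $R_{\mu\nu}$ and $R$ vanish along $\gamma$, so the geometric term $\mathfrak{Q}^\xi_C[f]\equiv 0$ and the corresponding Wick-square contribution drops out of $\mathfrak{Q}^\xi(f)$. The quadratic-form piece $\xi(\wick{\Phi^2}\circ\gamma)(\mathfrak{Q}_B[f])$ and its formula $\mathfrak{Q}_B[f](t)=2(\partial_t f)^2$ carry over verbatim, since $\tau=t$ along $\gamma$. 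Thus the only work is to reduce $\widetilde{\mathfrak{Q}}^\xi_A(f)$ to the stated integral.

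Next I would choose the vielbein $v_0=\partial_t$, $v_i=\partial_i$, which indeed restricts to an orthonormal frame on $\gamma(\tau)=(\tau,\mathbf{x}_0)$, so that
\[
\hat{\rho}_1 = \tfrac{1}{2}(\partial_0\otimes\partial_0)+\tfrac{1}{2}(1-4\xi)\Bigl[m^2(\1\otimes\1)+\sum_{i=1}^{n-1}(\partial_i\otimes\partial_i)\Bigr].
\]
The key observation is the wave-equation identity \eqref{eqn_mink_waveq}, which is immediate from the spacetime translation invariance of $\omega_2^\Omega$ and the Klein--Gordon equation. It allows the mass term plus spatial-derivative term to be traded for $(\partial_0\otimes\partial_0)$ acting on $\omega_2^\Omega$, yielding the combined prefactor
\[
\varphi^{*}(\hat{\rho}_1\,\omega_2^\Omega)(\overline{g}\otimes g) \;=\; (1-2\xi)\,\varphi^{*}\bigl((\partial_0\otimes\partial_0)\omega_2^\Omega\bigr)(\overline{g}\otimes g)
\]
for any $g\in\mathscr{D}(\R)$.

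I would then evaluate the two pull-backs appearing in the definition of $\widetilde{\mathfrak{Q}}^\xi_A(f)$ from the mode expansion \eqref{eqn_vactwopointfctn}. Formula \eqref{eqn_vacuumsmear} gives $\varphi^{*}\omega_2^\Omega(\overline{g}\otimes g)$ directly, and inserting $\partial_0\otimes\partial_0$ simply multiplies the integrand by $\omega(k)^2$. Substituting $g=f_\alpha$, the Fourier convention $\hat f(k)=\int e^{ikx}f(x)\,d^n x$ gives $\widetilde{f_\alpha}(\omega(k))=\hat f(\alpha+\omega(k))$. Collecting the $(1-2\xi)\omega^2$ contribution from $\hat\rho_1$ and the $2\xi\alpha^2$ contribution explicit in \eqref{eqn_qbound}, and combining the prefactors $\tfrac{1}{\pi}\cdot\tfrac{1}{2}\cdot\tfrac{S_{n-2}}{(2\pi)^{n-1}}=\tfrac{S_{n-2}}{(2\pi)^n}$, yields precisely the asserted expression for $\widetilde{\mathfrak{Q}}^\xi_A(f)$.

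The whole argument is essentially a bookkeeping computation; the only conceptual step is the use of \eqref{eqn_mink_waveq} to absorb the mass and spatial-derivative contributions into a purely temporal one, and I expect this — together with careful tracking of Fourier conventions in passing from $f$ to $f_\alpha$ — to be the only point where slips are likely. Convergence of the $\alpha$-integral is automatic from the Schwartz decay of $\hat f$, so no further analytic input beyond the globally hyperbolic theorem is required.
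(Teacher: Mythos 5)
Your proposal is correct and is essentially the paper's own argument: the paper likewise specialises the globally hyperbolic theorem to the vacuum reference state (with $\mathfrak{Q}^\xi_C$ dropping out by Ricci-flatness) and uses the translation-invariance identity~\eqref{eqn_mink_waveq} to absorb the mass and spatial-derivative terms of $\hat{\rho}_1$ into the $(\partial_0\otimes\partial_0)$ contribution, giving the prefactor $(1-2\xi)\omega^2+2\xi\alpha^2$ via~\eqref{eqn_vacuumsmear}. Your bookkeeping, including $\hat{f}_\alpha(u)=\hat{f}(u+\alpha)$ and the combined constant $S_{n-2}/(2\pi)^n$, checks out.
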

It is easy to see that $\widetilde{\mathfrak{Q}}_A^{\xi}(f)$ is non-negative for $\xi\in
[0,1/4]$ and that $\widetilde{\mathfrak{Q}}^{\xi=0}_A(f)=\mathfrak{Q}(f)$, 
where $\mathfrak{Q}(f)$ is the lower bound found in~\cite{FE98} for the minimally coupled ($\xi=0$)
scalar field. As a consequence, theorem~\ref{thm_minkbound} recovers the results of~\cite{FE98} for minimal coupling.  

We can write $\widetilde{\mathfrak{Q}}_A^{\xi}(f)$ in the form\footnote{We will assume that $n>2$ and $m>0$, but one can find similar expressions for these cases as well.}
\begin{eqnarray}
\lefteqn{\widetilde{\mathfrak{Q}}_A^{\xi}(f)
= \frac{S_{n-2}}{(2\pi)^{n}}   \int_m^\infty \D u\  |\hat{f}|^2(u) u^n} \nonumber\\
&&\times\left(\frac{1}{n}Q_{n,2}\left(\frac{u}{m}\right)-4\xi \frac{1}{n-1} Q_{n,1}\left(\frac{u}{m}\right)+2\xi \frac{1}{n-2}Q_{n,0}\left(\frac{u}{m}\right)\right) ,
\end{eqnarray}
where the non-negative functions $Q_{n,k}$ are defined by
\begin{equation}
Q_{n,k}(y)=\frac{n+k-2}{y^{n+k-2}}\int_1^y\D x  \  (x^2-1)^{(n-3)/2}x^k,
\end{equation}
for $n+k\geq 2$ .
They vanish for $k+n=2$ and else have the properties that $Q_{n,k}(1)=0$ and $Q_{n,k}(y)\to 1$ as $y\to
\infty$. As they are continuous, they are therefore bounded. 
It might be useful to use the following estimate for $\widetilde{\mathfrak{Q}}_A^{\xi}(f)$, which follows from the previous discussion for $n>2$ and $\xi\in [0,1/4]$,
\begin{equation}\label{eqn_boundforQA}
\widetilde{\mathfrak{Q}}_A^{\xi}(f)\leq \frac{S_{n-2}}{(2\pi)^{n}} \frac{3n-4}{2n(n-2)}  \int_0^\infty \D u\  |\hat{f}|^2(u) u^n.
\end{equation}
This estimate is true for the massive and the massless case and one can find a similar estimate for the two-dimensional case.

To conclude this subsection, we investigate the behaviour under rescaling of the averaging function.
The smearing function $f_\lambda (t)=f(t/\lambda)/\sqrt{\lambda}$, for $\lambda>0$,
has the property that $\|f_\lambda\|_{L^2}=\|f\|_{L^2}$, and its Fourier transform satisfies the identity
\begin{equation}
\left(\hat{f}_\lambda (u)\right)^2=\lambda \left(\hat{f}(\lambda u)\right)^2.
\end{equation}
One can conclude from this that~$\widetilde{\mathfrak{Q}}_A^{\xi}(f_\lambda)=O(\lambda^{-n})$ as $\lambda\to \infty$.
In fact, even faster decay could be concluded if $m>0$ using the arguments of~\cite{EF07}.
It follows that $\lambda\widetilde{\mathfrak{Q}}_A^{\xi}(f_\lambda)\to 0$ as $\lambda\to \infty$. For states with
$\abs{\langle\wick{\Phi^2}\circ \gamma\rangle_\Psi (t)}<c(1+|t|)^{1-\varepsilon}$, for some positive constants $c,\varepsilon$
one can show that $\lambda\langle\wick{\Phi^2}\circ \gamma\rangle_\Psi (\mathfrak{Q}_B(f_\lambda))\to 0$ as $\lambda\to \infty$, and we therefore obtain the averaged weak energy condition (AWEC) for $\xi\in [0,1/4]$ in the form
\begin{equation}\label{eq:ANEC}
\liminf_{\lambda\to\infty}\int\D t \ \langle \rho^\textrm{quant}\circ\gamma\rangle_\Psi (t) \  f (t/\lambda)^2\geq 0.
\end{equation}

This is in line with a result in~\cite{K91}, which shows that AWEC holds for states in which the particle number and the energy is
bounded (see the penultimate paragraph in section III of~\cite{K91}). In the
minimally coupled case, AWEC is also known to follow from
QEIs~\cite{FR95,FE98}. We will return to the AWEC briefly
in section~\ref{section_energetics}.

\section{Investigation of state-dependence}\label{section_nontriviality}
The lower bounds that we derived have the characteristic that they are state dependent except for minimal coupling. 
As all previously known QEIs are state independent it is important to understand the nature of the state dependence to ensure that our bounds are not vacuous.

\subsection{KMS states and temperature scaling}
In this subsection we analyse the temperature scaling behaviour of the stress energy tensor
and the bound in theorem \ref{thm_minkbound} for a KMS state $\Psi^{\beta}$,
i.e., a thermal equilibrium state at positive temperature~$\beta^{-1}$ as seen by the observer on~$\gamma$.
Its two-point function $\omega_2^\beta$ in a $n$-dimensional Minkowski space, with~$n>3$, is given by
\begin{equation}
\omega_2^\beta (t,\mathbf{x},t',\mathbf{x}')
=\int \D\mu (\kv)\ \Big( \frac{e^{-i\left((t-t')\omega (\mathbf{k})-(\mathbf{x}-\mathbf{x}')\mathbf{k}\right)}}{1-e^{-\beta \omega (\mathbf{k})}}+\frac{e^{+i\left((t-t')\omega (\mathbf{k})-(\mathbf{x}-\mathbf{x}')\mathbf{k}\right)}}{e^{\beta \omega (\mathbf{k})}-1} \Big).
\end{equation}
We renormalise the two-point function of the KMS-state by subtracting the two-point function of the vacuum~\eqref{eqn_vactwopointfctn}.
In the coincidence limit, we find that 
\begin{eqnarray}\label{eqn_KMScoincidence}
[\wick{\omega^\beta_2}]_c(t,\xv)&=&\int \D \mu (\kv ) \ \frac{2}{e^{\beta \omega (\mathbf{k})}-1} \nonumber\\
&=&B_{n,0}(\beta m),
\end{eqnarray}
with the positive function $B_{n,k}$ defined on $[0,\infty)$ for~$k\geq 0$ by
\begin{equation*}
B_{n,k}(\alpha)=\frac{S_{n-2}}{(2\pi)^{n-1}}\int_{\alpha}^\infty \mathrm{d}z\ \ (z^2-\alpha^2)^\frac{n-3}{2} \frac{z^k}{e^z-1}.
\end{equation*}
The expression~\eqref{eqn_KMScoincidence} is positive and invariant under spacetime translations.
Thus the state-dependent part of the lower bound in theorem \ref{thm_minkbound} is given by
\begin{equation}
\langle\wick{\Phi^2}\circ \gamma\rangle_{\Psi^\beta}(\mathfrak{Q}_B[f])=2\beta^{2-n}B_{n,0}(\beta m)\  \norm{f'}^2_{L^2},
\end{equation}
while the state independent part~$\widetilde{\mathfrak{Q}}^\xi (f)$, is obviously independent of $\beta$. 
On the other hand, the renormalised energy density of this state is 
\begin{eqnarray}
\langle\rho^\textrm{quant}\rangle_{\Psi^{\beta}}(t,\mathbf{x})&=&[\hat{\rho}\wick{\omega^\beta_2}]_c(t,\mathbf{x})\nonumber\\
&=&\int \mathrm{d}\mu (\kv)
\frac{2\omega^2 (\mathbf{k})}{e^{\beta \omega (\mathbf{k})}-1}\nonumber\\
&=&\beta^{-n}B_{n,2}(\beta m),
\end{eqnarray}
so that the time averaged energy density is given by
\begin{equation}
\langle\rho^\textrm{quant}\circ \gamma\rangle_{\Psi^{\beta}}( f^{2})=\beta^{-n} B_{n,2}(\beta m) \ \norm{f}^2_{L^2}.
\end{equation}
We can now state the non-triviality result:
\begin{theorem}\label{thm_nontrivial}
The bound for the energy density of a non-minimally coupled scalar quantum field given in theorem~\ref{thm_minkbound} is non-trivial in the sense of \cite{F06}, i.e, there do not exist constants~$c,c'$ such that
\begin{equation}
\abs{\langle\rho^\textrm{quant}\circ \gamma\rangle_{\Psi}( f^{2})}\leq c+c'\abs{\langle\mathfrak{Q}^{\xi}(f)\rangle_\Psi}
\end{equation}
for all Hadamard states $\Psi$ unless $f$ is identically zero.
\end{theorem}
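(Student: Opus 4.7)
The strategy is to probe the putative inequality using the one-parameter family of KMS states $\{\Psi^\beta\}_{\beta>0}$ constructed just above, and to extract a contradiction by sending $\beta\to 0^+$ (the high-temperature limit). The mechanism is a mismatch in temperature scaling: the averaged energy density of $\Psi^\beta$ diverges like $\beta^{-n}$, whereas $\langle\mathfrak{Q}^\xi(f)\rangle_{\Psi^\beta}$ diverges only like $\beta^{2-n}$, so the ratio blows up as $\beta^{-2}$.

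Concretely, I would collect the $\beta$-dependence just computed. On $\Psi^\beta$ one has
\begin{equation*}
\langle \rho^\textrm{quant}\circ\gamma\rangle_{\Psi^\beta}(f^2)
= \beta^{-n}B_{n,2}(\beta m)\,\norm{f}_{L^2}^2,
\qquad
\langle\mathfrak{Q}^\xi(f)\rangle_{\Psi^\beta}
= \widetilde{\mathfrak{Q}}_A^\xi(f) + 2\xi\,\beta^{2-n}B_{n,0}(\beta m)\,\norm{f'}_{L^2}^2,
\end{equation*}
where $\widetilde{\mathfrak{Q}}_A^\xi(f)$ does not depend on $\beta$, the functions $B_{n,k}$ are non-negative, and the right-hand side of the second identity is non-negative, so the absolute values in the target inequality may be dropped when $\Psi=\Psi^\beta$. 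Assuming for contradiction that constants $c,c'$ existed, inserting $\Psi^\beta$ and multiplying by $\beta^n$ would give
\begin{equation*}
B_{n,2}(\beta m)\,\norm{f}_{L^2}^2
\;\leq\; c\,\beta^n + c'\beta^n\widetilde{\mathfrak{Q}}_A^\xi(f)
+ 2c'\xi\,\beta^2 B_{n,0}(\beta m)\,\norm{f'}_{L^2}^2.
\end{equation*}
Taking $\beta\to 0^+$ and using that, for $n>3$, the defining integrals converge at $\alpha=0$ to $B_{n,k}(0)=\tfrac{S_{n-2}}{(2\pi)^{n-1}}\Gamma(n-2+k)\zeta(n-2+k)<\infty$, the right-hand side tends to $0$ while the left-hand side tends to $B_{n,2}(0)\norm{f}_{L^2}^2>0$. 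This forces $\norm{f}_{L^2}=0$, and by continuity of $f\in\mathscr{D}(\R,\R)$ we conclude $f\equiv 0$.

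The only delicate step is verifying positivity of $B_{n,2}(0)$ and continuity of $B_{n,k}$ at the origin, both of which are immediate for $n>3$ from the convergent representation of $\Gamma(n-2+k)\zeta(n-2+k)$; for $\xi=0$ the same argument works since then $\langle\mathfrak{Q}^\xi(f)\rangle_{\Psi^\beta}=\widetilde{\mathfrak{Q}}_A^0(f)$ is entirely $\beta$-independent and the contradiction is even stronger. No genuine obstacle beyond this bookkeeping arises; the heart of the matter is simply the dimensional observation that $\rho$ contains two more derivatives than $\wick{\Phi^2}$, which is why two extra powers of temperature appear on the energy-density side.
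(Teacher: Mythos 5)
Your proposal is correct and follows essentially the same route as the paper's own proof: probing the putative inequality with the KMS family $\Psi^\beta$, multiplying by $\beta^n$, and using the mismatch between the $\beta^{-n}$ divergence of the averaged energy density and the $\beta^{2-n}$ divergence of the state-dependent part to force a contradiction as $\beta\to 0^+$. Your explicit evaluation $B_{n,k}(0)=\tfrac{S_{n-2}}{(2\pi)^{n-1}}\Gamma(n-2+k)\zeta(n-2+k)$ is a slightly more detailed justification of the limits than the paper gives, but the argument is the same.
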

\begin{proof}
{}From the previous discussion, we find that for a fixed non-trivial smearing function~$f$, in the limit of high temperatures 
\begin{eqnarray}
\lim_{\beta\rightarrow 0}\beta^{n}\langle\rho^\textrm{quant}\circ\gamma\rangle_{\Psi^{\beta}}( f^{2})&=& B_{n,2}(0)\norm{f}^2_{L^2}>0,\\
\textrm{ and }\lim_{\beta\rightarrow 0}\beta^{n}\widetilde{\mathfrak{Q}}_A^{\xi}(f)&=&
\lim_{\beta\rightarrow 0}\beta^{n} \langle\wick{\Phi^2}\circ\gamma\rangle_{\Psi^\beta}(\mathfrak{Q}_B[f])= 0.
\end{eqnarray}
Now if the bound in theorem \ref{thm_minkbound} is trivial, then there exists some constant $c$ such that~\eqref{eqn_nontriviality} holds. This implies that
\begin{equation}
0<B_{n,2}(0)\norm{f}^2_{L^2}<\lim_{\beta\rightarrow 0}\beta^{n}\left(c+c'\abs{\widetilde{\mathfrak{Q}}_A^{\xi}(f)}+c'\abs{\xi  \langle\wick{\Phi^2}\circ\gamma\rangle_{\Psi^\beta}(\mathfrak{Q}_B[f])}\right)=0,
\end{equation}
which is a contradiction.
\end{proof}
A more refined formulation of this result would be that our bound is {\it non-trivial with respect to
high temperature scaling for KMS states}.

\subsection{Energy behaviour}\label{section_energetics}
The previous results already show that the lower bound in theorem~\ref{thm_minkbound} has a different scaling behaviour from 
the energy density itself. We will now present a more general analysis that gives more insight into
this, again working in four-dimensional Minkowski space~$\mathbf{M}^4_{\textrm{Mink}}$. 

The lower bound in theorem~\ref{thm_minkbound} only depends on the fields $\wick{\Phi^2}$ and
$\1$. On the other hand, the energy density also involves terms such as $\wick{\dot{\Phi}^2}$.
This results in a crucial difference in their energy behaviour. 
To be more precise, we seek values of $p,q\in\R^+$ for which there are
constants $c_{f,q},c_{f,p}'$ such that
\begin{equation}\label{eqn_sandwichbound}
 c_{f,q} \ (H+m\1)^{q}\geq (\rho^\textrm{quant}\circ\gamma) (f^2)\geq  -c_{f,p}'\ (H+m\1 )^{p}
\end{equation}
holds (in the sense of quadratic forms) on the set of Hadamard vector
states. In the minimally coupled
case we know that there is a state-independent lower bound so we may
take $p=0$; however for $\xi\in(0,1/4]$ we have already shown that $(\rho^\textrm{quant}\circ\gamma)
(f^2)$ is unbounded from below, so $p$ must be strictly positive if
\eqref{eqn_sandwichbound} is to hold. By 
theorem~\ref{thm_minkbound} it is enough to show that
$\mathfrak{Q}^\xi(f)\leq c_{f,p}'(H+m\1 )^{p}$ to conclude that the
right-hand inequality in \eqref{eqn_sandwichbound} holds; we will show
that this is possible for any $p>2$. 

On the other hand, we will show that the left-hand inequality in
\eqref{eqn_sandwichbound} cannot be satisfied for $q<3$. In this sense,
our lower bound represents a non-trivial constraint. Indeed, the
situation here is reminiscent of the sharp G{\aa}rding inequalities
studied in the theory of pseudodifferential operators, in which
operators with positive classical symbols may be bounded from below
`with a gain in derivatives', i.e., by an operator of lower order.
Although the analogy is not direct, it seems worthy of further
investigation.

As a consequence of this analysis we immediately obtain another proof of
theorem~\ref{thm_nontrivial}, namely that the lower bound in
theorem~\ref{thm_minkbound} is non-trivial in the sense of~\cite{F06}
(but this time using states in the domain of a power of the Hamiltonian
rather than KMS states). 

We begin by establishing our claim relating to the left-hand inequality
in \eqref{eqn_sandwichbound}. To do this we first consider the massless
field and construct the following one-particle state 
\begin{equation}
\Psi_{\kappa}= \frac{4\pi}{\kappa}\int\D\mu (\kv) e^{-\abs{\kv} /\kappa}a^\dagger (\kv)\ \Omega.
\end{equation}
It is straightforward to calculate that $\langle H^j\rangle_{\Psi_{\kappa}}=\left(\frac{\kappa}{2}\right)^j (j+1)!$ and that\footnote{For the concrete calculation in \eqref{eqnconcretecalculationatorigin} we assumed for simplicity that the geodesic $\gamma$ is located at the spatial origin.} 
\begin{equation}\label{eqnconcretecalculationatorigin}
\langle \rho^\textrm{quant}\circ\gamma \rangle_{\Psi_{\kappa}}(t)=\frac{\kappa^4}{\pi^2}\left\{\frac{4}{1+t^2\kappa^2}-4\xi\frac{6(t^2\kappa^2-1)}{(1+t^2\kappa^2)^4}\right\}.
\end{equation}
We have a peak at $t=0$, where the expectation value scales like the fourth power in the Hamiltonian. However, this pointwise behaviour will not hold for the smeared field.  
Let us assume that $f$ is an non-trivial, integrable function. We find that 
\begin{equation} \label{eqn_asympscaling}
\lim_{\kappa\to\infty}\frac{\langle \rho^\textrm{quant}\circ\gamma\rangle_{\Psi_{\kappa}}(f^2)}{\langle H^3\rangle_{\Psi_{\kappa}} \norm{f}_{L^2}^2}=\lim_{\kappa\to\infty}\frac{2\kappa^3(2+3\xi)/\pi}{3\kappa^3}=\frac{2(2+3\xi)}{3\pi},
\end{equation}
i.e., we have an asymptotic scaling like the expectation value of~$H^3$. 
In the limit where $\kappa$ becomes large, the high momenta are dominant. 
If the field is massive, the mass would therefore become negligible. 
We can deduce that the same scaling behaviour remains true for the massive
field, at least asymptotically. 
So the smeared energy density for the scalar field in Minkowski space
scales asymptotically at least with the third power of the Hamiltonian,
i.e., the left-hand inequality in \eqref{eqn_sandwichbound} can only
hold if $q\geq 3$. It is worth remarking that we could have conducted
the same analysis using the one-particle states investigated in section
\ref{sec_negenergexample}, and with the same result: although the energy
density in these states is negative at the spacetime origin, one may
check that the smeared energy density is positive for sufficiently large $\kappa$. 

The second part of the discussion aims to establish suitable $H$-bounds
on the state-dependent part of our lower bound,
i.e., the field~$\wick{\Phi^2}$. $H$-bounds have been
discussed elsewhere (see e.g.,~\cite{FH81}), but we require more
detailed information on the power of the Hamiltonian involved and on the
controlling constants than we have been able to locate in the literature.
The following discussion may therefore be of
independent interest. 

We will now assume that $m>0$, but allow general spacetime dimensions
$n\geq 2$. Let $h$ be the one-particle Hamiltonian and $\D\Gamma$ be the second quantisation
map (so, for example the Hamiltonian is $H=\D\Gamma(h)$). 
Let us initially restrict to the domain $D_\mathscr{S}\subset\mathcal{F}_s (\mathcal{H})$, defined as the
space of vectors in Fock space all of whose $n$-particle wavefunctions are Schwartz
functions and all but finitely many of which vanish identically,
see section X.7 in~\cite{RS75}. 
This is a dense domain in the Fock space and for every vector $\Psi\in D_\mathscr{S}$, we find that
$\kv\mapsto \norm{a (\kv)\Psi}^2 \in\mathscr{S}(\R^{n-1})$ and that
\begin{equation}
\norm{\D\Gamma (h^p)^{1/2}\Psi}^2=\int\D\mu (\kv) \omega^p (\kv) \norm{a(\kv)\Psi}^2
\end{equation}
for any $p\in\R$. 

Let $g\in \mathscr{S}(\R^{n-1})\subset \mathcal{H}$. Then
$\omega^{-p/2}g\in\mathcal{H}$, where $\omega$ acts on $\mathcal{H}$ by multiplication and the Cauchy-Schwarz inequality implies that 
\begin{eqnarray}\label{eqn_annihil_bound}
\norm{a (g)\Psi}^2&\leq&\int \D\mu (\kv)\D\mu (\kv')\abs{\langle a(\kv)\Psi |a(\kv')\Psi\rangle}\abs{g (\kv)}\abs{g(\kv')}\nonumber\\
&\leq&\left\{\int \D\mu (\kv)\norm{a(\kv)\Psi}\abs{g (\kv)}\right\}^2\nonumber\\
&=&\left\{\int \D\mu (\kv)\left(\omega^{p/2}(\kv)\norm{a(\kv)\Psi}\right)\abs{\omega^{-p/2}g (\kv)}\right\}^2\nonumber\\
&\leq&\norm{\D\Gamma (h^p )^{1/2}\Psi}^2 \cdot\norm{\omega^{-p/2}g}^2_\mathcal{H}\quad p\in \R,
\end{eqnarray}
for $\Psi\in D_\mathscr{S}$. Due to the construction of $D_\mathscr{S}$, one can find that  $D_\mathscr{S}\subset  D(H^p)$ for any $p\in\R$.
Now let $\Psi^{(l)}\subset D_\mathscr{S}$ be a $l$-particle state. We have
\begin{eqnarray}
\left(H^p \Psi^{(l)}\right)(\kv_1,\dots,\kv_l)&=&\left(\sum_{i=1}^l \omega (\kv_i)\right)^p \Psi^{(l)}(\kv_1,\dots,\kv_l),\\
\left(\D\Gamma (h^p) \Psi^{(l)}\right)(\kv_1,\dots,\kv_l)&=&\sum_{i=1}^l \omega^p (\kv_i) \Psi^{(l)}(\kv_1,\dots,\kv_l).
\end{eqnarray}
For~$a,b\geq 0$ and $p\geq 1$ we find that~$(a+b)^p\geq a^p+b^p$,\footnote{Since the function~$x\to x^{p-1}$ is monotone increasing for $p\geq 1$, we have
\begin{equation}
(a+b)^p=a(a+b)^{p-1}+b(a+b)^{p-1}\geq a^p+b^p.
\end{equation}} so 
\begin{equation}
\left(\sum_{i=1}^l \omega (\kv_i)\right)^p \geq \sum_{i=1}^l \omega^p (\kv_i).
\end{equation}
This implies that $0\leq\D\Gamma (h^p) \leq H^p$ on $D_\mathscr{S}$ for $p\geq 1$. Using this, the commutation relations and~\eqref{eqn_annihil_bound}, we recover $H$-bounds of the form
\begin{eqnarray}
\norm{a (g)\Psi}^2&\leq& \norm{H^{p/2}\Psi}^2 \cdot\norm{\omega^{-p/2}g}_\mathcal{H}^2,\nonumber\\
\norm{a^\dagger (g)\Psi}^2&\leq& \norm{H^{p/2}\Psi}^2 \cdot\norm{\omega^{-p/2}g}_\mathcal{H}^2+\norm{g}_\mathcal{H}^2\norm{\Psi}^2,
\end{eqnarray}
for $ p\geq 1$ on~$D_\mathscr{S}$.

Now define the distribution space,
\begin{equation} 
\mathcal{L}_q=\{F\in  \mathscr{E}' (\mathbf{M}^n_{\textrm{Mink}}) |  \|F\|_q^2:=\|\omega^q\widetilde{F}\|_\mathcal{H}^2<\infty \},
\end{equation}
where $\norm{\cdot}_q$ is a semi-norm. Since the field is massive we have the inclusion $ \mathcal{L}_0  \subset \mathcal{L}_q$  for $q\leq0$. The field~$\Phi (F)$ defines an operator on Fock space on the domain $D\left((N+1)^{1/2}\right)$, if $F,\overline{F}\in \mathcal{L}_0$ (or equivalently~$\widetilde{F},\widetilde{\overline{F}}\in\mathcal{H}$). Since $D_\mathscr{S}\subset D\left((N+1)^{1/2}\right)$, $\Phi (F)$ is well-defined on $D_\mathscr{S}$.

Now assume $p\geq 1$, $\Psi\in D_\mathscr{S}$ and $F,\overline{F}\in \mathcal{L}_0$.
Using the inequality $(u+v)^2\leq 2(u^2+v^2)$
we find 
\begin{eqnarray}\label{eqn_bound2}
\lefteqn{\norm{\Phi (F)\Psi}^2-\norm{\Phi (F)\Omega}^2}\nonumber\\
&\leq&\left(\norm{a(\widetilde{\overline{F}})\Psi}+\norm{a^\dagger(\widetilde{F})\Psi}\right)^2\nonumber\\
&\leq& 2\norm{H^{p/2}\Psi}^2\cdot \left(\norm{\overline{F}}_{-p/2}^2+\norm{F}^2_{-p/2}\right)+\norm{F}_0^2\norm{\Psi}^2.
\end{eqnarray}
Since this inequality is valid on $D_\mathscr{S}$, which was dense in the Fock space $\mathcal{F}_s(\mathcal{H})$, this result extends to all~$\Psi\in D(H^{p/2})$.

In order to apply these results in conjunction with the point-splitting trick, we have to establish a connection between pulled back two-point functions and their representation in terms of distributionally smeared fields acting as operators on a certain domain. In particular one has to check that for some $F(x)=\zeta (t)\otimes \delta_{\xv_0}(\xv)$ with $\zeta\in\mathscr{D}(\R)$, we have
\begin{equation}\label{eqn_singularconvergence}
(\varphi^*\omega_2^\Psi)(\overline{\zeta},\zeta)=\norm{\Phi (F)\Psi}^2.
\end{equation}
The left-hand side is well-defined as discussed before. The right-hand side is well-defined as
$F,\overline{F}\in\mathcal{L}_0$. The identity can then be shown by constructing a
sequence $F_r\in \mathscr{D}(\mathbf{M}^n_{\textrm{Mink}})$, with
$\omega_2^\Psi(\overline{F_r},F_r)\to(\varphi^*\omega_2^\Psi)(\overline{\zeta},\zeta)$
and $\widetilde{F_r}\to \widetilde{F}=\widetilde{\zeta\otimes\delta_{\xv_0}}$ in 
$\mathcal{H}$.\footnote{One can do this by defining~$F_r (t,\xv)=\zeta(t)\chi_r (\xv)$, with the approximate identity~$\chi_r\in\mathscr{D}(\R^{n-1})$.}
The latter property ensures that $\Phi (F_r)\Psi\to\Phi
(\zeta\otimes\delta_{\xv_0})\Psi$ and the identity therefore holds because
$\omega_2^\Psi(\overline{F_r},F_r) =\norm{\Phi(F_r)\Psi}^2$ for test
functions $F_r\in \mathscr{D}(\mathbf{M}^n_{\textrm{Mink}})$.

We are now able to apply the above result to find $H$-bounds on the Wick square, smeared along the inertial curve~$\gamma$.
Applying the point-splitting trick and~\eqref{eqn_bound2}, we find that
\begin{eqnarray}\label{eqn_split1}
\lefteqn{\langle\wick{\Phi^2}\circ \gamma\rangle_\Psi(f^2)}\nonumber\\
&=&\int_0^\infty\frac{\D\alpha}{\pi}\left(\norm{\Phi (f_\alpha\otimes\delta_{\xv_0})\Psi}^2-\norm{\Phi (f_\alpha\otimes\delta_{\xv_0})\Omega}^2 \right)\nonumber\\
&\leq&2\norm{H^{p/2}\Psi}^2\cdot\int_0^\infty\frac{\D\alpha}{\pi} \left(\norm{\overline{f_\alpha\otimes\delta_{\xv_0}}}^2_{-p/2}+\norm{f_\alpha\otimes\delta_{\xv_0}}^2_{-p/2}\right)\nonumber\\
&&+\norm{\Psi}^2\ \int_0^\infty\frac{\D\alpha}{\pi} \norm{f_\alpha\otimes\delta_{\xv_0}}_0^2\nonumber\\
&=&2\norm{H^{p/2}\Psi}^2\cdot\int_{-\infty}^\infty\frac{\D\alpha}{\pi} \norm{f_\alpha\otimes\delta_{\xv_0}}^2_{-p/2}+\norm{\Psi}^2\ \int_0^\infty\frac{\D\alpha}{\pi} \norm{f_\alpha\otimes\delta_{\xv_0}}_0^2,
\end{eqnarray}
where $f\in\mathscr{D}(\R,\R)$ and we used~\eqref{eqn_singularconvergence} with $F(x)=f(t) e^{i\alpha t}\otimes \delta_{\xv_0}(\xv)$. 
For convenience, let us introduce the positive quadratic functionals
\begin{eqnarray}
B_{-p/2}(f)&=&2\int_{-\infty}^\infty\frac{\D\alpha}{\pi} \norm{f_\alpha\otimes\delta_{\xv_0}}^2_{-p/2}\label{defnBpf}\\
C_0(f)&=& \int_0^\infty\frac{\D\alpha}{\pi} \norm{f_\alpha\otimes\delta_{\xv_0}}_0^2.\label{defnC0f}
\end{eqnarray}
\begin{lemma}\label{lemm_normbound}
Let $(p+2)>n$, where $n$ is the spacetime dimension, and let $f\in \mathscr{S}(\R,\R)$. For the massive case,  then
$C_0(f)<\infty$ and $B_{-p/2}(f)<\infty$. 
\end{lemma}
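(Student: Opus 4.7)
The strategy is to reduce both $B_{-p/2}(f)$ and $C_0(f)$ to explicit one-dimensional integrals in the variable $u=\omega(k)$, and then verify finiteness using the Schwartz decay of $\hat f$ together with the mass gap $m>0$. First I would compute $\widetilde F(\kv)$ for $F(t,\xv)=f_\alpha(t)\delta_{\xv_0}(\xv)$: the spatial delta contributes a pure phase and the time factor produces a shifted Fourier transform, giving
\[
\widetilde F(\kv)=e^{-i\kv\cdot\xv_0}\,\hat f\bigl(\omega(\kv)+\alpha\bigr),
\]
so that after passing to spherical coordinates in $\kv$ and substituting $u=\omega(k)$ one obtains the master formula
\[
\norm{f_\alpha\otimes\delta_{\xv_0}}_q^2=\frac{S_{n-2}}{2(2\pi)^{n-1}}\int_m^\infty \D u\,(u^2-m^2)^{(n-3)/2}\, u^{2q}\,|\hat f(u+\alpha)|^2.
\]

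For $B_{-p/2}(f)$, where $\alpha$ ranges over all of $\R$, Tonelli together with the translation $v=u+\alpha$ decouples the $\alpha$- and $u$-integrations, yielding
\[
B_{-p/2}(f)=\frac{S_{n-2}\,\norm{\hat f}_{L^2(\R)}^2}{\pi(2\pi)^{n-1}}\int_m^\infty \D u\,(u^2-m^2)^{(n-3)/2}\, u^{-p}.
\]
Since $f\in\mathscr{S}(\R,\R)$, the prefactor is finite by Plancherel. The remaining $u$-integral is integrable at $u=m$ for every $n\geq 2$, and the factor $u^{-p}$ is bounded near $u=m$ precisely because $m>0$ (the massive hypothesis enters here); at infinity the integrand is $O(u^{n-3-p})$, and integrability there is equivalent to the assumed condition $p+2>n$.

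For $C_0(f)$ the restriction to $\alpha\geq 0$ means the substitution $v=u+\alpha$ produces only a triangular region, giving
\[
C_0(f)=\frac{S_{n-2}}{2\pi(2\pi)^{n-1}}\int_m^\infty \D v\,|\hat f(v)|^2\int_m^v \D u\,(u^2-m^2)^{(n-3)/2}.
\]
The inner integral is bounded by a polynomial in $v$ of degree $n-2$ (or by $\log v$ when $n=2$), which is overwhelmed by the Schwartz decay of $|\hat f(v)|^2$, so $C_0(f)$ is finite with no hypothesis on $p$. The step I expect to require the most care is the convergence analysis at the two endpoints of the $u$-integral in the formula for $B_{-p/2}(f)$: the hypothesis $p+2>n$ is precisely what is needed to control the ultraviolet tail, while the positivity of $m$ is exactly what saves the infrared endpoint from the factor $u^{-p}$. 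Everything else amounts to bookkeeping with the Fourier conventions of the paper and routine application of Tonelli's theorem.
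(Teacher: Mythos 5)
Your proof is correct, and the overall strategy coincides with the paper's: both rest on the identity $\widetilde{f_\alpha\otimes\delta_{\xv_0}}(\kv)=e^{-i\kv\cdot\xv_0}\hat{f}(\omega(\kv)+\alpha)$, Tonelli's theorem, the Schwartz decay of $\hat{f}$, the mass gap controlling the infrared end, and the condition $p+2>n$ controlling the ultraviolet end. The execution, however, genuinely differs. The paper never evaluates the integrals: for $C_0(f)$ it majorizes $\abs{\hat{f}(\omega+\alpha)}^2\leq c(m+\omega+\alpha)^{-(n+1)}\leq c(m+\alpha)^{-2}\omega^{-(n-1)}$, so that the $\alpha$- and $\kv$-dependences factorize pointwise, and for $B_{-p/2}(f)$ it bounds $\abs{\hat{f}(\omega+\alpha)}^2$ by the Lorentzian $c\,(m^2+(\omega+\alpha)^2)^{-1}$, integrates out $\alpha$ exactly to get a factor of order $c/m$, and is left with $\int\D\mu(\kv)\,\omega^{-p}(\kv)$, which is finite precisely when $p+2>n$ --- the same ultraviolet criterion you obtain after the substitution $u=\omega(k)$, since $\int\D\mu(\kv)\,\omega^{-p}(\kv)$ is proportional to $\int_m^\infty\D u\,(u^2-m^2)^{(n-3)/2}u^{-p}$. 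You instead reduce everything to exact one-dimensional identities: the translation trick $\int_\R\D\alpha\,\abs{\hat{f}(u+\alpha)}^2=\norm{\hat{f}}_{L^2}^2$ replaces the Lorentzian bound, and the triangular-region interchange for $C_0(f)$ replaces the factorized pointwise estimate. Your endpoint analysis is sound, including the integrable singularity $(u^2-m^2)^{-1/2}$ at $u=m$ when $n=2$, and you correctly locate where the hypothesis $m>0$ enters (boundedness of $u^{-p}$ at the lower endpoint, respectively convergence of the $\alpha$-integral in the paper's version). What your route buys is exact closed forms with explicit constants --- in particular it makes manifest that $C_0(f)$ is finite with no condition on $p$ at all, and it feeds directly into quantitative estimates of the kind the paper alludes to immediately after the lemma (cf.\ the remark that \eqref{defnBpf} and \eqref{defnC0f} may be estimated as in \eqref{eqn_boundforQA}). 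What the paper's route buys is brevity: two pointwise bounds and one application of Tonelli, with no spherical-coordinate bookkeeping.
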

\begin{proof}
To show that $\abs{C_0(f)}<\infty$, we see that
\begin{equation}
\abs{\widetilde{f_\alpha\otimes\delta_{\xv_0}}(\kv)}^2=\abs{\hat{f}(\omega (\kv)+\alpha)}^2\leq \frac{c}{(m+\omega(\kv)+\alpha)^{n+1}}\leq\frac{1}{(m+\alpha)^2}\frac{c}{\omega^{n-1} (\kv)},
\end{equation}
for some positive constant $c$, where we made use of the fact that $f$ is in Schwartz space.
It follows that there exists a constant $c'$, such that
\begin{equation}
\ \norm{f_\alpha \otimes\delta_{\xv_0}}_0^2\leq\frac{c'}{(m+\alpha)^2},
\end{equation}
which is integrable in $\alpha$ on $\R^+$ proving that $C_0(f)<\infty$.

To show that $\abs{B_{-p/2}(f)}<\infty$, realise that there exists a constant~$c$, such that
\begin{equation}
\abs{\widetilde{f_\alpha\otimes\delta_{\xv_0}}(\kv)}^2=\abs{\hat{f}(\omega (\kv)+\alpha)}^2\leq \frac{c}{(m^2+(\omega(\kv)+\alpha)^{2})}.
\end{equation}
Using this inequality, we get
\begin{eqnarray}
\int_{-\infty}^\infty\frac{\D\alpha}{\pi} \norm{f_\alpha\otimes\delta_{\xv_0}}^2_{-p/2}&\leq &\int \D \mu (\kv)\ \omega^{-p}(\kv)  \  \int_{-\infty}^\infty\frac{\D\alpha}{\pi} \frac{c}{(m^2+(\omega(\kv)+\alpha)^{2})}\nonumber\\
&<&\frac{c}{m} \int \D \mu (\kv) \ \omega^{-p} (\kv),
\end{eqnarray}
where we made use of Tonelli's theorem.
The last expression, however, is finite due to the restriction on $p$.
\end{proof}
We note that, under the assumptions of lemma~\ref{lemm_normbound}, 
the quantities in \eqref{defnBpf} and \eqref{defnC0f} may be estimated 
by similar arguments to those used to obtain~\eqref{eqn_boundforQA}.
Combining inequality~\eqref{eqn_split1} and lemma~\ref{lemm_normbound},
we can now summarise our result as the following $H$-bound: 
\begin{theorem}\label{thm_wickbound}
Let $n$ be the spacetime dimension, $p>(n-2)$, $f\in\mathscr{D}(\R,\R)$ and $\Psi \in D(H^{p/2})$. Then
\begin{equation}
\langle\wick{\Phi^2}\circ \gamma\rangle_\Psi(f^2)\leq B_{-p/2}(f)\norm{H^{p/2}\Psi}^2+C_0(f)\norm{\Psi}^2 <\infty.
\end{equation}
\end{theorem}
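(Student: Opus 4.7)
The plan is to combine the point-splitting representation of $\langle\wick{\Phi^2}\circ \gamma\rangle_\Psi(f^2)$ displayed in inequality~\eqref{eqn_split1} with the finiteness statement of lemma~\ref{lemm_normbound}. Both ingredients have been established in the preceding discussion, so the proof reduces to assembling them and handling the domain issue, i.e.\ extending from $D_\mathscr{S}$ to all of $D(H^{p/2})$.

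First I would, for $\Psi\in D_\mathscr{S}$, apply the point-splitting trick to rewrite
\[
\langle\wick{\Phi^2}\circ \gamma\rangle_\Psi(f^2)
=\int_0^\infty\frac{\D\alpha}{\pi}\,\varphi^*\wick{\omega_2^\Psi}(\overline{f_\alpha},f_\alpha)
\]
and then invoke the identity~\eqref{eqn_singularconvergence} to recognise the integrand as a difference of squared norms of the formally smeared field $\Phi(f_\alpha\otimes\delta_{\xv_0})$ applied to $\Psi$ and to $\Omega$. This step requires the approximation argument sketched in the footnote: choose $F_r(t,\xv)=f_\alpha(t)\chi_r(\xv)$ with an approximate identity $\chi_r$, observe that $\widetilde{F_r}\to\widetilde{f_\alpha\otimes\delta_{\xv_0}}$ in $\mathcal H$, and use continuity of $\Phi$ on $D((N+\1)^{1/2})$ together with the known smooth representation of $\omega_2^\Psi$ on pairs of test functions.

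Next, for each $\alpha$ I would apply the $H$-bound~\eqref{eqn_bound2} with the exponent $p\ge 1$ to $\Phi(f_\alpha\otimes\delta_{\xv_0})\Psi$; this requires only that $f_\alpha\otimes\delta_{\xv_0}$ lie in $\mathcal{L}_0$ (finite mass, so $\widetilde{f_\alpha\otimes\delta_{\xv_0}}\in\mathcal H$) and that $\omega^{-p/2}\widetilde{f_\alpha\otimes\delta_{\xv_0}}\in\mathcal H$. Integrating the resulting pointwise bound over $\alpha\in\R^+$ and using Tonelli's theorem to exchange the $\alpha$-integral with the $\kv$-integral hidden inside the norms $\norm{\,\cdot\,}_{-p/2}$ and $\norm{\,\cdot\,}_0$ produces precisely
\[
\langle\wick{\Phi^2}\circ \gamma\rangle_\Psi(f^2)\le B_{-p/2}(f)\,\norm{H^{p/2}\Psi}^2+C_0(f)\,\norm{\Psi}^2,
\]
with the factor of $2$ in~\eqref{defnBpf} arising because the $\alpha$-integration is extended to the whole real line after absorbing both $\norm{\overline{f_\alpha}\otimes\delta_{\xv_0}}_{-p/2}^2$ and $\norm{f_\alpha\otimes\delta_{\xv_0}}_{-p/2}^2$ under one integral.

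Finiteness of the right-hand side is exactly the content of lemma~\ref{lemm_normbound} under the assumption $p>n-2$ (equivalently $p+2>n$), so the bound is non-trivial. Finally, the extension from $D_\mathscr{S}$ to all of $D(H^{p/2})$ follows by density: if $\Psi_j\in D_\mathscr{S}$ converges to $\Psi\in D(H^{p/2})$ in the graph norm of $H^{p/2}$, then both the norms $\norm{H^{p/2}\Psi_j}$ and $\norm{\Psi_j}$ converge, while the left-hand side is continuous because $\Phi(f_\alpha\otimes\delta_{\xv_0})$ is controlled by $(N+\1)^{1/2}$ and hence by $H^{p/2}$ on this domain. The main obstacle is the careful control of the $\alpha$-integrals of norms involving the singular test function $\delta_{\xv_0}$, and this is precisely what the massiveness assumption plus $p>n-2$ secure through lemma~\ref{lemm_normbound}; once that lemma is in hand, the theorem is essentially bookkeeping.
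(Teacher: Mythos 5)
Your proposal is correct and takes essentially the same route as the paper, whose proof consists precisely of combining the point-splitting representation~\eqref{eqn_split1} --- itself resting on the identity~\eqref{eqn_singularconvergence} and the $H$-bound~\eqref{eqn_bound2} --- with the finiteness statement of lemma~\ref{lemm_normbound} and the density of $D_\mathscr{S}$ to extend the inequality to all of $D(H^{p/2})$. The only cosmetic slip is your attribution of the factor $2$ in~\eqref{defnBpf}: it originates in the prefactor of~\eqref{eqn_bound2}, whereas absorbing $\norm{\overline{f_\alpha}\otimes\delta_{\xv_0}}^2_{-p/2}$ and $\norm{f_\alpha\otimes\delta_{\xv_0}}^2_{-p/2}$ into one integral (using $\overline{f_\alpha}=f_{-\alpha}$ for real $f$) is exactly what extends the $\alpha$-integration to all of $\R$ without producing any extra factor; the final constants nevertheless agree.
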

As an immediate consequence of theorem~\ref{thm_wickbound} we have:
\begin{corollary}
Subject to the assumptions and notation of theorem~\ref{thm_minkbound} and
theorem~\ref{thm_wickbound}, we have
\begin{equation}\label{eqn_boundincoroll}
\abs{\langle\mathfrak{Q}^{\xi}\rangle_\Psi(f)}\leq
\left(\widetilde{\mathfrak{Q}}^\xi_A(f)+2\xi C_0(\partial_t f)\right)\norm{\Psi}^2+2\xi B_{-p/2}(\partial_t f) \norm{H^{p/2}\Psi}^2.
\end{equation}
It follows that
there exists a constant $c'_{f,p}$ for which
\begin{equation}\label{unter_sandwich}
(\rho^{quant}\circ\gamma) (f^2)\geq \mathfrak{Q}^{\xi}(f)\geq  -c_{f,p}'\ (H+m\1 )^{p}
\end{equation}
as an inequality of quadratic forms for Hadamard vector states $\Psi$.
\end{corollary}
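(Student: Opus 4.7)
The plan is to combine the lower bound of theorem~\ref{thm_minkbound} with the $H$-bound of theorem~\ref{thm_wickbound}, after supplying the missing one-sided estimate on the Wick square from the positivity of the state two-point function, and then to convert the result into a power of $H+m\1$ by elementary spectral calculus. Since we work in Minkowski space (hence Ricci-flat), the term $\mathfrak{Q}^\xi_C[f]$ in the decomposition~\eqref{eqn_Q_decomposition} vanishes identically, leaving
\[
\mathfrak{Q}^\xi(f)=\widetilde{\mathfrak{Q}}^\xi_A(f)\1+\xi\bigl(\wick{\Phi^2}\circ\gamma\bigr)(\mathfrak{Q}_B[f]),
\]
with $\mathfrak{Q}_B[f]=2(\partial_t f)^2=h^2$ for $h=\sqrt{2}\,\partial_t f\in\mathscr{D}(\R,\R)$.

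The central step is two-sided control on $\langle\wick{\Phi^2}\circ\gamma\rangle_\Psi(h^2)$. For the upper bound I apply theorem~\ref{thm_wickbound} directly, noting that the semi-norms $B_{-p/2}$ and $C_0$ defined in \eqref{defnBpf}--\eqref{defnC0f} are homogeneous of degree two, so that the factor $\sqrt{2}$ in $h$ becomes an overall factor of $2$. For the lower bound I revisit the point-splitting identity that underlies~\eqref{eqn_split1}, writing
\[
\langle\Psi,\wick{\Phi^2}\circ\gamma(h^2)\Psi\rangle =\int_0^\infty\frac{\D\alpha}{\pi}\Bigl(\norm{\Phi(h_\alpha\otimes\delta_{\xv_0})\Psi}^2-\norm{\Phi(h_\alpha\otimes\delta_{\xv_0})\Omega}^2\norm{\Psi}^2\Bigr),
\]
which is legitimate via~\eqref{eqn_singularconvergence}. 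The first term is manifestly non-negative, and $\norm{\Phi(F)\Omega}^2=\norm{F}_0^2$, so the integrated lower bound is precisely $-C_0(h)\norm{\Psi}^2$---with the same constant $C_0(h)$ that already appears in the upper estimate. A triangle-inequality application then yields~\eqref{eqn_boundincoroll}.

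For the sandwich~\eqref{unter_sandwich}, I invoke monotone functional calculus for the non-negative self-adjoint operator $H$: one has $\1\le m^{-p}(H+m\1)^p$ and $H^p\le(H+m\1)^p$ for any $p\ge 0$, and hence
\[
\norm{\Psi}^2\le m^{-p}\norm{(H+m\1)^{p/2}\Psi}^2, \qquad \norm{H^{p/2}\Psi}^2\le\norm{(H+m\1)^{p/2}\Psi}^2
\]
on $D(H^{p/2})$. Inserting these into~\eqref{eqn_boundincoroll} gives the operator-valued inequality $\mathfrak{Q}^\xi(f)\le c'_{f,p}(H+m\1)^p$ as a quadratic form, where
\[
c'_{f,p}=m^{-p}\bigl(\widetilde{\mathfrak{Q}}^\xi_A(f)+2\xi C_0(\partial_t f)\bigr)+2\xi B_{-p/2}(\partial_t f),
\]
and chaining with $(\rho^\textrm{quant}\circ\gamma)(f^2)\ge-\mathfrak{Q}^\xi(f)$ from theorem~\ref{thm_minkbound} delivers the stated result.

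The only substantive point is the lower estimate on the Wick square, which theorem~\ref{thm_wickbound} does not supply; the rest is assembly. Fortunately this missing bound follows immediately from positivity of $\omega_2^\Psi$ once the vacuum subtraction is isolated by point splitting, and the fact that it emerges with the same constant $C_0(h)$ that already appears in the upper bound is what makes the absolute-value estimate~\eqref{eqn_boundincoroll} symmetric and clean.
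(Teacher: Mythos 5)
Your proposal is correct and matches the paper's (implicit) argument: the corollary is stated there as an immediate consequence of theorems~\ref{thm_minkbound} and~\ref{thm_wickbound}, and your assembly---the upper bound from theorem~\ref{thm_wickbound} applied to $h=\sqrt{2}\,\partial_t f$ with the quadratic homogeneity of $B_{-p/2}$ and $C_0$ producing the factors of $2$, the lower bound $-C_0(h)\norm{\Psi}^2$ obtained by discarding the manifestly non-negative term in the point-split representation, and the elementary spectral estimates $\1\leq m^{-p}(H+m\1)^p$ and $H^p\leq (H+m\1)^p$---is precisely the intended chain. You also correctly supply the one detail the paper leaves tacit (the one-sided lower estimate on the Wick square, emerging with the same constant $C_0$ as in the upper bound) and correctly read the middle member of~\eqref{unter_sandwich} as $-\mathfrak{Q}^{\xi}(f)$ when chaining with theorem~\ref{thm_minkbound}.
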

In particular, we see that large negative time-averaged energy densities
(for a given smearing function) can only be obtained at large positive
overall energies. 

As an application of these results we may again consider the rescaled
test function $f_\lambda$ as in our earlier discussion of the AWEC. One
can show that $\lambda C_0(\partial_t f_\lambda)$ and  $\lambda B_{-p/2}(\partial_t
f_\lambda)$ both tend to zero as $\lambda\to\infty$ (for the allowed
values of $p$) which, together with our earlier observation that
$\lambda \widetilde{\mathfrak{Q}}^\xi_A(f_\lambda)\to 0$, yields a proof
of AWEC in the form~\eqref{eq:ANEC} for all Hadamard vector states
(which necessarily belong to the domain of $H^{p/2}$). 

Now let us return to the four-dimensional case. Our result shows that a bound of the type \eqref{unter_sandwich}
can be satisfied for any $p>2$ (although we
cannot exclude the possibility that it might also hold for some $p\leq
2$). The important point is that our QEI is a non-trivial
restriction on the averaged energy density because the \emph{upper} bound in
\eqref{eqn_sandwichbound} cannot be satisfied for any $q<3$ (although it
can be satisfied for any $q>4$ by adapting our $H$-bound arguments).

\section{Conclusion}\label{conclusion}
Our main results may be summarised as follows. First, we have shown (at least for
massless fields in four-dimensional Minkowski space)
that the non-minimally coupled scalar field with $\xi>0$ admits states with arbitrarily
large negative energy density over arbitrarily large bounded spacetime regions.
Thus the non-minimally coupled field cannot obey QEIs of the type
previously studied in the literature, in which there is a state-independent
lower bound. Second, by combining the QEI derivation for the minimally
coupled field~\cite{F00} with techniques previously applied to the
classical non-minimally coupled field~\cite{FO06}, we have derived a new
type of QEI with a state-dependent lower bound, for couplings in the
range $(0,1/4]$ in general globally hyperbolic spacetimes. Third, we
have analysed the state-dependence of the bound, which involves the Wick
square of the field, rather than the Wick powers of its derivatives that
appear in the energy density. This involved the formulation of various $H$-bounds, which may be
of independent interest. Roughly
speaking, the results of this analysis tell us that the lower bound
scales more softly with the overall energy scale than the energy density
itself. Thus we may conclude that negative energy effects with large
magnitude, while possible over large regions, require more energy to
achieve than positive energy densities of the same magnitude and that
the `energy budget' for these two effects will grow with a different
power. 

In the light of our results, it seems reasonable to expect that
generic interacting quantum fields will not obey state-independent QEIs,
as has also been argued on physical grounds for a particular model in~\cite{OG03}. The
state-independent QEIs that have been found for other free fields and
conformal fields in two dimensions (see \cite{FH05}) should therefore be
regarded as particularly simple cases. In general, then, the aim should
be to establish non-trivial state-dependent bounds. Suitable
generalisations of our $H$-bounds to higher order Wick polynomials may be useful in this context.

It also becomes important to understand whether one can draw useful physical conclusions
from state-dependent QEIs. For example, the state-independent QEIs have been used to
place constraints on exotic
spacetimes~\cite{PF97,FR96,FR05,ER97} and
are linked to questions of thermodynamic
stability~\cite{F78,FV03}. 
A first indication that state-dependent bounds can be used for similar
purposes may be found in our derivation of AWEC. In general we expect
that similar results will obtain for state-dependent bounds, once one
restricts to states of a given energy scale. Likewise, we would expect
the phenomenon of quantum interest~\cite{FR99,FT00} to be governed
by the energy scale in this context. Finally, our general QEI is of the
so-called `difference' type: it constrains the normal ordered energy
density rather than the (Hadamard) renormalised version. This
restriction has recently been removed for the minimally coupled field
in~\cite{FS07} to obtain an `absolute' QEI; one would expect that this can also be adapted to the
non-minimally coupled field.

\appendix
\section{Proof of theorem~\ref{thm_symident}}\label{app_identity}

Throughout the appendix $[\cdot]_c$ denotes the `coincidence limit', i.e., the diagonal of smooth functions on $M\times M$, but for the sake of clearity we will omit the arguments.
\begin{proof}[Proof of theorem \ref{thm_symident}]
First, we look at the top line in~\eqref{eqn_symident}. To make the calculations clearer, we start by calculating the expression without the symmetric product. 
We know by Synge's theorem that 
\begin{equation}
\partial [F]_c=[(\partial\otimes \1)F]_c+[(\1\otimes \partial)F]_c,
\end{equation}
where $F$ is a smooth function on $M\times M$, so 
\begin{eqnarray}\label{eqn_symident1}
\lefteqn{\partial [(\1\otimes\partial h^2)F-(\1\otimes h^2\partial )F]_c}\nonumber\\
&=&[(\partial\otimes\partial h^2)F+(\1\otimes\partial^2h^2)F]_c-[(\partial\otimes h^2\partial )F+(\1\otimes \partial h^2 \partial)F]_c\nonumber\\
&=&[(\partial\otimes (\partial h^2))F+(\partial\otimes h^2\partial)F]_c\nonumber\\
&&+[(\1\otimes (\partial^2h^2))F+2(\1\otimes (\partial h^2)\partial )F+(\1\otimes h^2\partial^2)F]_c\nonumber\\
&&-[(\partial\otimes h^2\partial)F]_c-[(\1\otimes (\partial h^2)\partial)F+(\1\otimes h^2\partial^2)F]_c\nonumber\\
&=&2(\partial h^2)[(\1 \otimes_\mathfrak{s}\partial)F]_c+(\partial^2h^2)[(\1\otimes\1)F]_c.
\end{eqnarray}
Where we want to emphasise that there is a symmetric product in the last line.
Now it is quite obvious that the same calculation is valid if one exchanges the arguments of $F$.
Thus adding the term $2h^2[(\1\otimes_\mathfrak{s}\partial^2)F]_c$ on both sides, we find the following identity for the symmetrised expression
\begin{eqnarray}\label{eqn_symident2}
\lefteqn{2h^2[(\1\otimes_\mathfrak{s}\partial^2)F]_c +\partial [(\1\otimes_\mathfrak{s}\partial h^2)F-(\1\otimes_\mathfrak{s} h^2\partial )F]_c}\nonumber\\
&=&2h^2[(\1\otimes_\mathfrak{s}\partial^2)F]_c+2(\partial h^2)[(\1 \otimes_\mathfrak{s}\partial)F]_c+(\partial^2h^2)[(\1\otimes_\mathfrak{s}\1)F]_c.
\end{eqnarray}
Here the top line is obviously identical with the expression in the top line of~\eqref{eqn_symident}.

Now let us do the analogous calculations for the bottom line in~\eqref{eqn_symident}. As before, we start for clarity without the symmetric product
\begin{eqnarray}\label{eqn_symident3}
\lefteqn{2\partial [(h\otimes \partial h)F]_c-2[(\partial h\otimes \partial h)F]_c+2(\partial h)^2 [(\1\otimes\1)F]_c}\nonumber\\
&=&2[(\partial h\otimes \partial h)F+(h\otimes \partial^2 h)F]_c-2[(\partial h\otimes \partial h)F]_c+2(\partial h)^2 [(\1\otimes\1)F]_c\nonumber\\
&=&2h[(\1\otimes (\partial^2h))F+2(\1\otimes (\partial h )\partial)F+(\1\otimes h \partial^2)F]_c+2(\partial h)^2 [(\1\otimes\1)F]_c\nonumber\\
&=&2h^2 [(\1\otimes \partial^2)F]_c+2(2h\partial h)[(\1\otimes\partial)F]_c+(2h\partial^2h+2(\partial h)^2)[(\1\otimes\1)F]_c\nonumber\\
&=&2h^2 [(\1\otimes \partial^2)F]_c+2(\partial h^2)[(\1\otimes\partial)F]_c+(\partial^2h^2)[(\1\otimes\1)F]_c.
\end{eqnarray}
Again, if we symmetrise this, we get
\begin{eqnarray}\label{eqn_symident4}
\lefteqn{2\partial [(h\otimes_\mathfrak{s} \partial h)F]_c-2[(\partial h\otimes_\mathfrak{s} \partial h)F]_c+2(\partial h)^2 [(\1\otimes_\mathfrak{s}\1)F]_c}\nonumber\\
&=&2h^2 [(\1\otimes_\mathfrak{s} \partial^2)F]_c+2(\partial h^2)[(\1\otimes_\mathfrak{s}\partial)F]_c+(\partial^2h^2)[(\1\otimes_\mathfrak{s}\1)F]_c.
\end{eqnarray}
Comparing the bottom row of~\eqref{eqn_symident4} with the bottom row of~\eqref{eqn_symident2} proves the theorem.
\end{proof}


\end{document}